\def\thm@space@setup{%
  \thm@preskip=0.5em
  \thm@postskip=-0em%
}
\newtheorem{definition}{\bf Definition}	[section]
\newtheorem{theorem}{\bf Theorem}		[section]
\newtheorem{lemma}{\bf Lemma}		[section]
\newtheorem{remark}{\bf Remark}		[section]
\newcounter{appdx}
\newcommand{\kETAL}     {{\em et~al.}}		%
\newcommand{\red}[1]{{#1}}
\newcommand{\blue}[1]{{#1}}
\newcommand{\cmts}[1]{{#1}}
\newcommand{\myendbox}{\hfill $\Box$}
\begin{document}

\title{Survivable and Bandwidth-Guaranteed Embedding of Virtual Clusters in Cloud Data Centers \\(Extended Version)\thanks{
Yu, Xue and Zhang (\{ruozhouy, xue, xzhan229\}@asu.edu) are all with Arizona State University, Tempe, AZ 85287. 
Li (tolidan@tsinghua.edu.cn) is with Tsinghua University, Beijing, China. 
This research was supported in part by NSF grants 1457262 and 1421685, the National Key Basic Research Program of China (973 program) under Grant 2014CB347800, and the National Key Research and Development Program of China under Grant 2016YFB1000200. 
The information reported here does not reflect the position or the policy of the funding agencies.
This is the extended version of \cite{infocom17original} which has been accepted for publication in IEEE INFOCOM 2017.
}
\vspace{-0.4in}
}

\author{\IEEEauthorblockN{Ruozhou Yu, Guoliang Xue, Xiang Zhang, Dan Li}}

\maketitle
\thispagestyle{plain}
\pagestyle{plain}

\begin{abstract}
Cloud computing has emerged as a powerful and elastic platform for internet service hosting, yet it also draws concerns of the unpredictable performance of cloud-based services due to network congestion.
To offer predictable performance, the virtual cluster abstraction of cloud services has been proposed, which enables allocation and performance isolation regarding both computing resources and network bandwidth in a simplified virtual network model.
One issue arisen in virtual cluster allocation is the survivability of tenant services against physical failures.
Existing works have studied virtual cluster backup provisioning with fixed primary embeddings, but have not considered the impact of primary embeddings on backup resource consumption.
To address this issue, in this paper we study how to embed virtual clusters survivably in the cloud data center, by jointly optimizing primary and backup embeddings of the virtual clusters.
We formally define the survivable virtual cluster embedding problem. %
We then propose a novel algorithm, which computes the most resource-efficient embedding given a tenant request.
Since the optimal algorithm has high time complexity, we further propose a faster heuristic algorithm, which is several orders faster than the optimal solution, yet able to achieve similar performance.
Besides theoretical analysis, we evaluate our algorithms via extensive simulations.
\end{abstract}

\begin{IEEEkeywords}
Virtual cluster, survivability, bandwidth guarantee
\end{IEEEkeywords}

\IEEEpeerreviewmaketitle

\section{Introduction}
\label{sec:intro}

\noindent Cloud computing has emerged as a trending platform for hosting various services for the public.
Among different cloud computing platforms, the Infrastructure-as-a-Service (IaaS) clouds offer virtualized computing infrastructures to public tenants in order to host tenant services.
Enabled by advanced resource virtualization technologies, the clouds support intelligent resource sharing among multiple tenants, and can provision resources per tenant demand.

However, due to the massive migration of services to the cloud, there is increasing concern about the unpredictable performance of cloud-based services.
One major cause is the lack of network performance guarantee. %
All the tenants have to compete in the congested cloud network in an unorganized manner.
This has motivated recent efforts on cloud resource sharing with network bandwidth guarantee, for which a novel cloud service abstraction has been proposed, named virtual cluster (VC)~\cite{Ballani2011b}.
The VC abstraction allows each tenant to specify both the virtual machines (VMs) and per-VM bandwidth demand of its service. %
The cloud then realizes the request by allocating VMs on physical machines (PMs), as well as reserving sufficient bandwidth in the network to guarantee the bandwidth demand in the \emph{hose model}~\cite{Ballani2011b, Zhu2012a}.
The process of resource allocation for virtual clusters is called \emph{Virtual Cluster Embedding} (VCE).
Algorithms have been developed for VCE with various objectives and constraints~\cite{Ballani2011b, Zhu2012a, Xie2012, Yu2014a}.

One missing perspective in existing VCE solutions is the \emph{availability} of tenant services.
Due to the large-scale nature of cloud data centers, PM failures can happen frequently in the cloud~\cite{Yalagandula2004}.
When such failure happens, all services who have their VCs fully or partly embedded on the failed PMs will be affected, possibly receiving degraded service performance or even interruption of operation.
This not only impairs the tenants' interests, but also incurs additional cost to the cloud due to violation of service-level agreements.

To achieve the high availability goal of tenant services, one common practice is to enable service \emph{survivability}, by utilizing extra resources to help services recover quickly when actual failures happen.
A survivability mechanism can be either \emph{pro-active} or \emph{reactive}.
A pro-active mechanism provisions backup resources at the time of service provisioning, prior to the actual happening of failures.
Due to this, it can offer guaranteed recovery against a certain level of failures in the substrate, at the cost of underutilized resources when no failure is present.
On the contrary, a reactive mechanism only looks for backups as a reaction to actual failures.
While this means less reserved resources in the normal operation, a reactive mechanism may not always find a feasible recovery during the failure, and thus cannot guarantee the survivability of the service.

In this paper, we study how to efficiently provide pro-active protection for tenant services under the VC model.
In particular, we aim at embedding tenant VC requests survivably such that they can recover from any single-PM failure in the data center, meanwhile minimizing the total amount of resources reserved for each tenant.
\blue{We formally define survivable VC embedding as a joint resource optimization problem of both primary and backup embeddings of the VC.}
Following existing work~\cite{Ballani2011b, Zhu2012a}, we assume the data center has a tree structure, which abstracts many widely-adopted data center architectures.
We then propose an algorithm to optimally solve the embedding problem, within time bounded by a polynomial of the network size and the number of requested VMs (pseudo-polynomial time to input size).
The algorithm is based on the observation that the embedding decisions are independent for each subtree in the same level.
Since the optimal approach is time-consuming, we further propose a faster heuristic algorithm, whose performance is comparable to the optimal in practical settings.
We conduct both theoretical analysis and simulation-based performance evaluation, which have validated the effectiveness of our proposed algorithms.
Our main contributions are summarized as follows:
\begin{itemize}
\blue{\item To the best of our knowledge, we are the first to study the survivable and bandwidth-guaranteed VC embedding problem with joint primary and backup optimization.
\item We propose a pseudo-polynomial time algorithm that finds the most resource-efficient survivable VC embedding for each tenant request.}
\item We further propose a heuristic algorithm that reduces the time complexity of the optimal algorithm by several orders, yet has similar performance in the online scenario.
\item We use extensive experiments to evaluate the performance of our proposed algorithms.
\end{itemize}

The rest of this paper is organized as follows.
Section~\ref{sec:rw} presents the background and related work on VC embedding, as well as survivable cloud service provisioning.
Section~\ref{sec:model} describes the network service model, introduces our pro-active survivability mechanism, and formally defines the survivable VC embedding problem.
Section~\ref{sec:a} presents our optimal algorithm, and theoretical analysis for the proposed algorithm.
Section~\ref{sec:heu} presents our efficient heuristic algorithm and proves its feasibility.
Section~\ref{sec:eval} shows the evaluation results of our proposed algorithms, compared to a baseline algorithm.
Section~\ref{sec:conclusions} concludes this paper.

\section{Background and Related Work}
\label{sec:rw}

\subsection{Virtual Cluster Abstraction}

\noindent {Virtual cluster} (VC) is a newly proposed cloud service abstraction, which offers bandwidth guarantee over existing VM-based abstractions~\cite{Ballani2011b}.
In the VC model, the tenant submits its service request in terms of both the number of VMs and the per-VM bandwidth demand.
A tenant request, defined as a tuple $\langle N, B \rangle$, specifies a virtual topology where $N$ uniform VMs are connected to a central virtual switch, each via a virtual link with bandwidth of $B$, as shown in \cmts{Fig.~\ref{fig:vc}}.
To fulfill the request, the cloud should provision $N$ VMs in the substrate data center, with bandwidth guaranteed in the \emph{hose model} (to be detailed in Section~\ref{sec:model}).
In short words, hose model brings two major benefits: reduced model complexity (user specifies per-VM bandwidth instead of per-VM pair bandwidth as in the traditional \emph{pipe model}~\cite{Duffield1999}), and simple characterization of the minimum bandwidth requirement on each link~\cite{Ballani2011b}; interested readers are referred to~\cite{Ballani2011b} and~\cite{Duffield1999} for details.

\vspace{0.1in}
\begin{figure}[h!]
\centering
\includegraphics[width=0.25\textwidth]{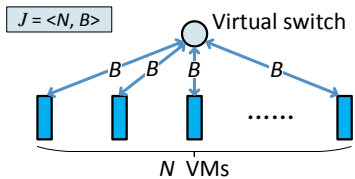}
\caption{Virtual cluster abstraction.}
\label{fig:vc}
\end{figure}

Ballani~\kETAL~\cite{Ballani2011b} first proposed the VC abstraction for cloud services with hose-model bandwidth guarantee.
They characterized the minimum bandwidth required on each link to satisfy the hose-model bandwidth guarantee, and developed a recursive heuristic for computing the VC embedding with minimum bandwidth consumption.
Based on it, Zhu~\kETAL~\cite{Zhu2012a} proposed an optimal dynamic programming algorithm to embed VC in the lowest subtree in tree-like data center topologies.
They also proposed a heuristic algorithm for VCs with heterogeneous bandwidth demands of their VMs.
TIVC~\cite{Xie2012} extends Oktopus with a time-related VC model that takes into consideration the dynamic traffic patterns of cloud services.
SVC~\cite{Yu2014a} also extends Oktopus,
and considers the statistical distribution of bandwidth demands between VMs. 
It proposes another dynamic programming algorithm to tackle the uncertain bandwidth demands.
DCloud~\cite{Li2015c} incorporates deadline constraints into the VC abstraction.
Instead of guaranteeing per-VM bandwidth, it guarantees that each accepted job will finish execution within its specified deadline.
\red{In a recent work, Rost~\kETAL~\cite{Rost2015} proposed that the VC embedding problem could be solved in polynomial time.
Yet, their model does not capture the minimum bandwidth required on each link to satisfy VM bandwidth requirements, which was characterized originally in~\cite{Ballani2011b}; as a result, their solution may over-provision bandwidth for VCs.
Recently, elastic bandwidth guarantee has drawn attention~\cite{Yu, Costa2016}.
Yu~\kETAL~\cite{Yu} proposed dynamic programming algorithms for dynamically scaling VCs, optimizing virtual cluster locality and VM migration cost.
Fuerst~\kETAL~\cite{Costa2016} also studied VC scaling minimizing migrations and bandwidth; their approach relies on the concept of center-of-gravity, which is determined by the location of the central switch.

None of the above has considered survivability in VC embedding.
Existing survivability mechanisms, such as those shown in the next subsection, do not lead to satisfactory solutions when directly applied to VC embedding, due to their lack of consideration for bandwidth requirement and/or lack of performance guarantee.
This paper focuses on deriving theoretically guaranteed solutions for the survivable VC embedding problem, as well as promising (low-complexity) heuristics.

Other problems similar to VC embedding include bandwidth-guaranteed VM embedding~\cite{Breitgand2012} and virtual network/infrastructure embedding~\cite{Even2013}.
The former problem is topology-agnostic, and only considers bandwidth on edge links; the latter considers a more general model where the virtual topology can be arbitrary graphs, hence it commonly suffers from high model complexity (to be detailed below).
}

\subsection{Survivable Virtual Cloud Service}

\noindent 
Providing survivability guarantee for VCs has been studied by Alameddine~\kETAL~\cite{Alameddine2016, Alameddine2016b}. 
Given a fixed primary embedding of a VC, they proposed a heuristic solution to ensure 100\% survivability for the VC with minimum backup VMs and bandwidth~\cite{Alameddine2016}. 
They further considered inter-VC bandwidth sharing to reduce backup bandwidth in\cite{Alameddine2016b}. 
However, their solutions did not consider the impact of the primary embedding to backup resource consumption. 
In this paper, we propose to jointly optimize both primary and backup resources of a VC, which can result in reduced backup resource consumption. 
Also, we propose an optimal solution, rather than heuristic solutions in\cite{Alameddine2016, Alameddine2016b}.

Beyond the VC abstraction, many have studied offering survivable virtual cloud services under various computing and network models~\cite{Nagarajan2007, Machida2010, Bin2011, Yeow2010, Yu2011a, Xu2013, Zhang2014a, Bodik2012}.
A first line of research focuses on providing survivable VM hosting in the cloud.
Nagarajan~\kETAL~\cite{Nagarajan2007} proposed the first pro-active VM protection method, which leverages the live migration capability of the Xen hypervisor to protect VMs from detected failures.
Based on this, Machida~\kETAL~\cite{Machida2010} studied redundant VM placement to protect a service from $k$ PM failures.
Bin~\kETAL~\cite{Bin2011} also studied VM placement for $k$-survivability, and proposed a shadow-based solution for VMs with heterogeneous resource demands.
The above papers do not offer bandwidth guarantee for VMs.
Our work utilizes a similar survivability mechanism as in~\cite{Bin2011}, where a predicted physical failure will trigger migration of the affected VMs to its backup location.
Yet we consider bandwidth guarantee in addition to VM placement, which complicates the problem and differentiates our work from the above.

Along another line, many solutions focus on survivable service hosting using the virtual infrastructure (VI) abstraction~\cite{Yeow2010, Yu2011a, Xu2013, Zhang2014a}.
A VI is a general graph, where each node or link may have a different resource demand, and an embedding is defined as two mappings: virtual node (VM) mapping and virtual link mapping; \red{the \emph{pipe model} is used in the VI abstraction instead of the hose model as in the VC abstraction.}
For example, Yeow~\kETAL~\cite{Yeow2010}, Yu~\kETAL~\cite{Yu2011a} and Xu~\kETAL~\cite{Xu2013} investigated survivable VI embedding through redundancy.
They formulated the problem with various objectives and constraints, and designed heuristic algorithms.
From a different angle, Zhang~\kETAL~\cite{Zhang2014a} proposed heuristic algorithms to embed VIs based on the availability statistics of the physical components.
The VI abstraction is more general than the VC abstraction, however, it is both hard to analyze theoretically and difficult to implement in large-scale networks due to its intrinsic model complexity~\cite{Ballani2011b}.
\red{An even more general model was proposed by Guo~\kETAL~\cite{Guo2010}, where a bandwidth requirement matrix is used to describe the bandwidth demand between each and every pair of virtual nodes; it suffers from even higher model complexity than the VI abstraction.}
CloudMirror~\cite{Lee} proposes a tenant application graph model for bandwidth guarantee, and discusses a heuristic opportunistic solution for high-availability that balances between bandwidth saving and availability.
Bodik~\kETAL~\cite{Bodik2012} studied general service survivability in bandwidth-constrained data centers.
Based on the service characteristics of Bing, they proposed an optimization framework and several heuristics to maximize fault-tolerance meanwhile minimizing bandwidth under the pipe model (per-VM pair bandwidth demand).
Survivability has been studied extensively in conventional communication networks and optical networks~\cite{Xue2003a, Xu2004a, Networks2014}.
Existing work focuses on providing connectivity guarantee against network link and switch failures.
The problem studied in this paper focuses on protecting tenant services from PM failures, which are different from link and switch failures.

\section{Network Model and Problem Statement}
\label{sec:model}

\noindent We study service provisioning in an IaaS cloud environment, where the cloud offers services in the form of inter-connected VMs.
To request a service, the tenant submits its request in terms of both VMs and network bandwidth.
A cloud hypervisor processes requests in an online manner.
For each request, the hypervisor first attempts to allocate enough resources in the data center.
If the allocation succeeds, it then reserves the allocated resources and provisions the VC for the tenant.
The VC will exclusively use all reserved resources until the end of its usage period, when the hypervisor will then revoke all allocated resources of the VC.
If the allocation fails due to lack of resources, the hypervisor rejects the request.

\subsection{Network Service Model}

\noindent Formally, each tenant request is defined as $\mathcal{J} = \langle N, B \rangle$, where $N$ is the number of requested VMs, and $B \ge 0$ is the per-VM bandwidth demand.

Following existing work~\cite{Ballani2011b, Zhu2012a}, we assume that the data center has a tree-structure topology.
In fact, many commonly used data center architectures have tree-like structures (FatTree~\cite{Al-Fares2008}, VL2~\cite{Greenberg2009}, etc.; \red{see Section~\ref{sec:disc}}), where our proposed algorithms can be adopted with simple abstraction of the substrate.
The substrate data center is defined as an undirected {tree} $T = (V, L)$, where $V$ is the set of nodes, and $L$ is the set of physical links.
The node set is further partitioned into two subsets $V = H \cup S$, where $H$ is the set of PMs that host VMs, and $S$ is the set of {abstract} switches which perform networking functions.
{Note that each abstract switch can represent a group of physical switches in the data center.}
Each PM is a leaf node, while each switch is an intermediate node in the topology.

{Without loss of generality, the substrate can be viewed as a rooted tree, and we pick a specific node $r \in S$ as its root, which generally represents all core switches.}
For each node $v$, we use $T_v$ to denote the subtree rooted at $v$.
We use $l_v \in L$ to denote the out-bound link of $T_v$, \emph{i.e.}, the link adjacent to node $v$ and on the shortest path from $v$ to global root $r$.
We use $d_v \ge 0$ to denote the number of children of $v$ in the tree.

For each PM $h \in H$, we define $c_h$ as the number of available VM slots on $h$.
For each node $v \in V$, we define $b_v$ as the available bandwidth on its out-bound link $l_v$.

\subsection{Virtual Cluster Embedding}

\noindent To fulfill a request $\mathcal{J} = \langle N, B \rangle$, the cloud needs to allocate $N$ VMs with bandwidth guarantee in \emph{the hose model}~\cite{Duffield1999}.
Given subtree $T_v$, let $n_v$ be the number of VMs allocated in $T_v$, then the minimum bandwidth required on link $l_v$ is given by
\begin{equation}
\label{eq:b}
\mathcal{B}(l_v) = \min \{ n_v, N - n_v \} \cdot B
\end{equation}
\emph{i.e.}, the bandwidth demand of VMs either inside $T_v$ or outside $T_v$, whichever is smaller.
In other words, given link bandwidth $b_v$, the number of VMs allocated within $T_v$, $n_v$, must satisfy %
\begin{equation}
n_v \in [0, {b_v \over B}] \cup  [N - {b_v \over B}, N]
\end{equation}
For simplicity of illustration, we define $\Lambda_v = \left( [0, {b_v \over B}] \cup  [N - {b_v \over B}, N] \right) \cap [0, N]$ as the \emph{feasible range of VMs w.r.t. the bandwidth of node $v$}, and $\overline \Lambda_v$ as its complement set over the universe $[0, N]$.
We also define $\lambda_v = \lceil N - {b_v \over B} \rceil$ as the lower bound of the upper feasible range, if $N - {b_v \over B} > N/2$.

\begin{definition}
Given substrate $T = (V, L)$ and request $\mathcal{J} = \langle N, B \rangle$, a {\bf Virtual Cluster Embedding (VCE)} is defined by a VM allocation function, 
$$\mathcal{C} : H \mapsto \mathbb{Z}^*$$
denoting the number of VMs allocated on each host, which satisfies the following properties: \\
\indent 1) $\mathcal{C}(h) \le c_h$ for any $h \in H$, \\
\indent 2) $\mathcal{B}(l_v) = \min \{ n_v, N - n_v \} \cdot B \le b_v$ for any $l_v \in L$, where $n_v = \sum\nolimits_{h \in T_v \cup H} { \mathcal{C}(h) }$, and \\
\indent 3) $\sum\nolimits_{h \in H} { \mathcal{C}(h) } = N$.
\myendbox
\end{definition}
Note that bandwidth allocation $\mathcal{B}$ is implicitly defined, as it can be computed based on VM allocation $\mathcal{C}$ as in Eq.~\eqref{eq:b}.

\begin{figure*}[tp!]
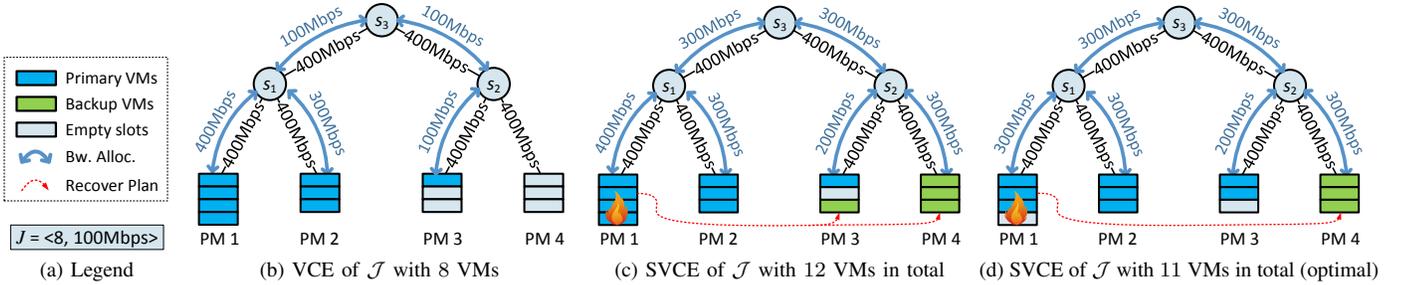

\centering
\subfloat[Legend]{
\includegraphics[width=0.12\textwidth]{svcelegend}
\label{fig:legend}}
\hfil
\subfloat[VCE of $\mathcal{J}$ with $8$ VMs]{
\includegraphics[width=0.275\textwidth]{vce}
\label{fig:vce}}
\hfil
\subfloat[SVCE of $\mathcal{J}$ with $12$ VMs in total]{
\includegraphics[width=0.275\textwidth]{svce}
\label{fig:svca}}
\subfloat[SVCE of $\mathcal{J}$ with $11$ VMs in total (optimal)]{
\includegraphics[width=0.275\textwidth]{svceo}
\label{fig:svcao}}
\caption{VCE and SVCE comparison of $\mathcal{J} = \langle 8, 100 \text{Mbps} \rangle$.}
\label{fig:vcea}
\end{figure*}

\cmts{Fig.~\ref{fig:vcea}\subref{fig:vce} shows an embedding of the tenant VC request $\mathcal{J} = \langle 8, 100 \text{Mbps} \rangle$ on a $3$-level $2$-ary tree topology rooted at node $s_3$, where PM $1$ has $4$ VM slots and PMs $2$--$4$ each has $3$ VM slots, and each link has bandwidth of $400$Mbps.
$8$ VMs are allocated on PMs $1$, $2$ and $3$ to fulfill the request.
Note that link bandwidth constraints are satisfied based on Eq.~\eqref{eq:b}, as shown.
For example, although the subtree rooted at switch $s_1$ contains $7$ working VMs, only $\min \{ 7, 8-7 \} \cdot 100 = 100$Mbps bandwidth is required on its out-bound link as shown.}

If a tenant request is accepted, it then exclusively uses all its allocated resources, including both VM slots $\mathcal{C}(h)$ for $\forall h \in H$ and bandwidth $\mathcal{B}(l_v)$ for $\forall v \in V$.
This ensures guaranteed resources to the tenant service, leading to predictable service performance.
Finding VCE has been addressed in~\cite{Zhu2012a, Xie2012}.

\subsection{Survivability Mechanism}
\label{sec:surv}

\noindent Existing work for VCE does not consider service availability against physical failures.
For example, when a PM fails, all services with VMs hosted on it will be interrupted.
Moreover, due to lack of pre-provisioned backup resources, the cloud may not be able to recover the affected services in a short period of time.
This will lead to violated service-level agreements, and further economic losses to both the tenants and the cloud.

We use a {pro-active} survivability mechanism to improve service availability.
The idea is to pre-reserve dedicated backup resources for each service, and pre-compute the recovery plan against any possible failure scenario, during the initial embedding process.
During the life cycle of the service, a predicted physical failure will trigger the pre-determined automatic failover process, which will migrate the affected VMs to their backups.%
This way, the interruption period of the service is minimized.
{Note that while failures are frequent in the cloud, simultaneous PM failures are relatively rare~\cite{Yalagandula2004}.
Hence we only focus on the single-PM failure scenario, where each failure is defined by the failed PM alone: $F \in H$.
Link and switch failures are not considered, as modern data centers typically have rich path diversity between any pair of PMs~\cite{Al-Fares2008, Greenberg2009, Lee}, which can effectively protect over these failures.
}

To realize this mechanism, the key point is to reserve sufficient backup resources during the initial embedding process.
Specifically, the cloud needs to reserve both backup VM slots and backup bandwidth for the service.
{To characterize the total VM and bandwidth consumption, and the survivability guarantee of a VC, we define the following concept:}
{
\begin{definition}
Given substrate $T = (V, L)$ and request $\mathcal{J} = \langle N, B \rangle$, a {\bf Survivable Virtual Cluster Embedding (SVCE)} is defined by a tuple of allocation functions $(\mathcal{C}_s, \mathcal{B}_s)$, with 
$$\mathcal{C}_s : H \mapsto \mathbb{Z}^*$$ 
denoting the total number of VMs allocated on each PM, and 
$$\mathcal{B}_s : L \mapsto \mathbb{R}^*$$
denoting the total bandwidth allocated on each link, such that during any single-PM failure $F \in H$, there still exists a \emph{VCE} of $\mathcal{J}$, \cmts{in the auxiliary topology $T^{\mathcal{J}, F} = (V \setminus \{ F \}, L \setminus \{ l_F \})$ with resources on nodes and links defined as the remaining allocated resources, \emph{i.e.}, $c^{\mathcal{J}, F}_h = \mathcal{C}_s(h)$ for $\forall h \in H \setminus \{ F \}$ and $b^{\mathcal{J}, F}_v = \mathcal{B}_s(l_v)$ for $\forall v \in V \setminus \{ F \}$}.
\myendbox
\end{definition}
}

The above definition does not explicitly require a \cmts{VCE} in the normal operation (when no failure happens).
Such requirement is implicit, because after allocation, the \cmts{VCE} for any failure scenario can be used in the normal operation.
We call the \cmts{VCE} used in the normal operation as the \emph{primary working set} (PWS), and the \cmts{VCE} used in failure $F$ as the \emph{recovery working set} (RWS) regarding $F$.
We also use \emph{working VMs} to denote the set of VMs that are used (active) in a specific scenario, compared to the set of \emph{backup VMs} that remain inactive.
Note that given the SVCE, both working sets can be easily computed using existing VCE algorithms~\cite{Zhu2012a, Xie2012}.
The cloud pre-computes these VCEs in all possible scenarios, hence when failure happens, the recovery process can quickly find the backup resources needed for each affected VM.

{Fig.~\ref{fig:vcea}\subref{fig:svca} shows an SVCE of the tenant request $\mathcal{J} = \langle 8, 100 \text{Mbps} \rangle$.}
Compared to the VCE which allocates exactly $N=8$ VMs, in total $12$ VMs are provisioned in the SVCE.
During the failure of any PM, 1) the number of remaining VMs is always no less than $N=8$, and 2) a VCE exists under the hose model (with no more than $4$, $2$ and $3$ VMs on one side of the link PM $1$--$s_1$, the link PM $3$--$s_2$ and any other link respectively).
Hence the given SVCE can always recover the requested VC during any failure.
Note that we can assign the RWS during arbitrary failure as the PWS.
In this example, the dark blue VM slots on PM $1$ to $3$ are assigned as the PWS, while the green VM slots are backups.

\subsection{Resource Optimization}

\noindent {The problem we study is to find the SVCE that uses minimum resources in the cloud.
Moreover, we are interested in finding the SVCE that occupies the minimum number of VM slots, in order to accommodate as many future requests as possible.
Formally, we study the following optimization problem:}
\begin{definition}
Given substrate $T = (V, L)$ and request $\mathcal{J} = \langle N, B \rangle$, the {\bf Survivable Virtual Cluster Embedding Problem (SVCEP)} is to find an SVCE of request $\mathcal{J}$ that consumes the minimum number of VM slots in the substrate $T$.
\myendbox
\end{definition}

\cmts{The necessity of resource optimization is illustrated in Figs.~\ref{fig:vcea}\subref{fig:svca} and~\ref{fig:vcea}\subref{fig:svcao}.
While Fig.~\ref{fig:vcea}\subref{fig:svca} indeed shows an SVCE of $\mathcal{J} = \langle 8, 100 \text{Mbps} \rangle$, it consumes $4$ backup VMs, due to that a single failure at PM $1$ will affect $4$ VMs.
On the contrary, Fig.~\ref{fig:vcea}\subref{fig:svcao} shows a different SVCE that consumes only $3$ backup VM slots, and is optimal regarding total VM slots consumption.
With less consumed resources, the data center can accept more tenant requests in the future.}
{Note that although we focus on minimizing VM consumption, our proposed algorithms can be extended to minimize bandwidth as well; \red{see Section~\ref{sec:disc}}.}
\red{In the next two sections, we will present our proposed algorithms for solving SVCE.}

\section{Optimal Algorithm}
\label{sec:a}

\subsection{Algorithm Description}
\label{sec:vm}

\noindent We start from designing an algorithm that solves SVCEP optimally.
The algorithm works in a bottom-up manner: 
starting from the leaf nodes up to the root, the algorithm progressively determines the minimum number of total VMs needed in the subtree rooted at each node, each time solving a generalized problem of SVCEP.

Formally, define the following generalization of SVCEP:
\begin{definition}
Given substrate $T = (V, L)$, request $\mathcal{J} = \langle N, B \rangle$, an arbitrary node $v \in V$, and two nonnegative integers $n_0 \in [0, N]$ and $n_1 \in [0, N]$, the generalized problem {\bf SVCEP-GP} seeks to find the minimum number of VMs needed in $T_v$, to ensure that $T_v$ can provide \emph{at least} $n_0$ working VMs when no failure happens in $T_v$, and \emph{at least} $n_1$ working VMs during arbitrary (single-PM) failure in $T_v$.
\myendbox
\end{definition}
\begin{remark}
Both $n_0$ and $n_1$ concern not only the VM slots that can be offered by each child subtree of node $v$, but also the node's out-bound bandwidth $b_v$.
In other words, there exists a feasible solution to SVCEP-GP with $v$, $n_0$ and $n_1$ if and only if there exist two integers $\tilde n_0 \ge n_0$ and $\tilde n_1 \ge n_1$, such that 
\begin{equation*}
\max \{ \min \{ \tilde n_0, N - \tilde n_0 \}, \min \{ \tilde n_1, N - \tilde n_1 \} \} \le b_v 
\end{equation*}
and all child subtrees of $v$ can jointly offer exactly $\tilde n_0$ and $\tilde n_1$ VMs in the normal and worst-case failure scenarios (failure resulting in minimum number of available VMs) respectively.
\end{remark}
\begin{remark}
Note that given $v$, $n_0$ and $n_1$, a feasible solution of SVCEP-GP does not guarantee that the subtree $T_v$ can provide \emph{exactly} $n_0$ VMs if no failure happens in $T_v$, or $n_1$ VMs if arbitrary failure happens in $T_v$.
It only requires that $T_v$ can offer \emph{at least} $n_0$ or $n_1$ VMs in either scenario respectively.
For example, a subtree with out-bound bandwidth of $200$Mbps can offer $6$ VMs for request $\mathcal{J} = \langle 8, 100 \text{Mbps} \rangle$ if its child subtrees can jointly offer $6$ VMs, but cannot offer exactly $5$ VMs due to lack of bandwidth in the hose model.
However, as we will prove in the next subsection, the optimal solution to SVCEP-GP with $v = r$ and $n_0 = n_1 = N$ yields an optimal solution to the original SVCEP, and vice versa.
\end{remark}

Utilizing the above subproblem structure, we propose the following dynamic programming (DP) algorithm to compute the optimal solution by solving a sequence of SVCEP-GP instances.
Define $N_v[n_0, n_1]$ as the optimal solution to SVCEP-GP on node $v$, with non-negative integers $n_0$ and $n_1$.
The values are computed for PMs and switches as follows:

\noindent{\bf PM computation:}
For leaf node $h \in H$, we have
\begin{equation}
\label{eq:hnv}
N_h[n_0, n_1] = 
\left\{ {\begin{array}{*{20}{l}}
	{n_0}		& {\text{if } n_1 = 0, n_0 \in [0, c_h] \cap \Lambda_h}\\
	{\lambda_h}	& {\text{if } n_1 = 0, n_0 \in \overline \Lambda_h, \lambda_h \le c_h }\\
	\infty 		& {\text{otherwise}}
\end{array}} \right.
\end{equation}

\emph{Explanation}: 
The number of working VMs that a PM can offer is bounded by three factors: the number of available slots $c_h$, the out-bound bandwidth $b_h$, and the requested VM number $N$.
In the normal operation, the minimum number of VMs to offer $n_0$ working VMs is equal to $n_0$, if $n_0$ is in the feasible range bounded by bandwidth $b_h$.
Recall that $\Lambda_v$ ($\overline \Lambda_v$) defines the feasible (infeasible) range of working VMs in $T_v$ w.r.t. bandwidth, and $\lambda_v$ is the lower bound of the upper range of $\Lambda_v$.
If $n_0 \in \overline \Lambda_h$, the PM cannot offer exactly $n_0$ VMs due to the bandwidth limit; however, if there are at least $\lambda_h$ available slots, the PM can offer $\lambda_h$ VMs, which guarantees \emph{at least} $n_0$ VMs in the PWS with the minimum total VMs.
Note that $n_1$ always equals $0$, as when this PM fails, no VM can be offered.
All other entries are $\infty$, meaning such instances are infeasible.

\noindent{\bf Switch computation:}
The computation for a switch node is more complicated, as there are exponential number of ways to write an integer value ($n_0$ or $n_1$) as the sum of $d_v$ integer values, where $d_v$ is the number of children of node $v$.
However, we observe that the allocation in each subtree is independent from the other subtrees.
Hence we employ another level of dynamic programming to aggregate results from child subtrees.

Define $N_v' [n_0, n_1, k]$ as the minimum number of total VMs in $T_v$, to ensure that $T_v$ can provide at least $n_0$ working VMs in the normal operation, and at least $n_1$ working VMs during arbitrary failure in $T_v$, \emph{using the first $k$ subtrees of $T_v$}, where $k \in \{ 0, \cdots, d_v \}$.
Note that $N_v' [n_0, n_1, k]$ does not consider the out-bound bandwidth $b_v$ of node $v$.
We first establish the relationship between $N_v$ and $N_v'$ as:
\begin{equation}
\label{eq:snv}
N_v[n_0, n_1] = 
\left\{ {\begin{array}{*{20}{l}}
	{N'_v[n_0, n_1, d_v]}				& {\text{if } n_0, n_1 \in \Lambda_h }\\
	{N'_v[\lambda_v, n_1, d_v]}		& {\text{if } n_0 \in \overline \Lambda_h, n_1 \in \Lambda_h }\\
	{N'_v[n_0, \lambda_v, d_v]}		& {\text{if } n_0 \in \Lambda_h, n_1 \in \overline \Lambda_h }\\
	{N'_v[\lambda_v, \lambda_v, d_v]}	& {\text{if } n_0, n_1 \in \overline \Lambda_h }\\
\end{array}} \right.
\end{equation}
for each switch node $v \in S$ and $n_0, n_1 \in \{ 0, \cdots, N \}$.

\emph{Explanation}: 
Based on their definitions, $N_v [n_0, n_1]$ and $N_v' [n_0, n_1, d_v]$ only differ in that the latter does not consider the out-bound bandwidth at node $v$.
Hence we apply the bandwidth constraints to obtain $N_v [n_0, n_1]$ from $N_v'$: if $b_v$ cannot support $n_0$ ($n_1$) working VMs in $T_v$, then we take the minimum value $\lambda_v$ that both can be supported by $b_v$ and is at least $n_0$ ($n_1$) as desired.
Note that both $N_v[n_0, n_1]$ and $N'_v[n_0, n_1, k]$ are non-decreasing in either $n_0$ or $n_1$ based on definition.
Hence the above defined $N_v[n_0, n_1]$ is optimal given the optimality of all its dependent $N_v'[n_0', n_1', d_v]$ values, for $n_0' \in \{ n_0, \lambda_v \}$ and $n_1' \in \{ n_1, \lambda_v \}$.

\noindent{\bf Inner DP:}
The value of $N_v'[n_0, n_1, k]$ is computed from $k=0$ to $d_v$.
The initial iteration where $k=0$ is computed as:
\begin{equation}
\label{eq:hnvp}
N'_v[n_0, n_1, 0] = 
\left\{ {\begin{array}{*{20}{l}}
	{0}	& {\text{if } n_0 = n_1 = 0}\\
	\infty & {\text{otherwise}}
\end{array}} \right.
\end{equation}
since only $0$ VMs can be offered using $0$ subtrees.
Based on this, each of the other iterations computes $N_v'[n_0, n_1, k]$ based on the values $N_v'[n_0', n_1', k-1]$ computed in the $(k-1)$-th iteration as well as the values $N_{u_k}[n_0'', n_1'']$ computed for the $k$-th child node $u_k$ of node $v$, as shown in Eq.~\eqref{eq:snvp}.
\begin{multline}
\label{eq:snvp}
N'_v[n_0, n_1, k] = {\min\limits_{\substack{n_0', n_0'', \\ n_1', n_1''}} \left\{ N'_v[n_0', n_1', k-1] + N_{u_k}[n_0'', n_1'']  \left|\,
\begin{array}{*{20}{l}}
{\ }	\\
{\ }
\end{array}
\right. \right.} 
\\ 
{\left.  
\begin{array}{*{20}{l}}
{n_0' + n_0'' \ { \ge }\  n_0 }	\\
{\min \{ n_0' + n_1'',n_0'' + n_1' \}  \ { \ge }\  {n_1}}		
\end{array}
\right\} }
\end{multline}

\emph{Explanation}: 
The computation of $N'_v[n_0, n_1, k]$ iterates over four variables: $n_0'$, $n_1'$, $n_0''$ and $n_1''$.
The rationale is that, either in the normal operation or during arbitrary failure, the required working VMs in the first $k$ subtrees of $T_v$ can be split into two parts: the VMs in the first $(k-1)$ subtrees, and the VMs in the $k$-th subtree.
Moreover, since we assume that only single PM failure can happen at any time, once the failure happens within one part, the other part is assured to work in the normal operation.
Hence we know how the required working VMs $n_0$ and $n_1$ should be split between the two parts.
Specifically, we have $n_0' + n_0'' \ge n_0$ in the normal operation;
the failure can happen in either the first or the second part, and the worst-case failure is the one that results in fewer working VMs: $\min \{ n_0' + n_1'', n_0'' + n_1' \} \ge n_1$.

\begin{algorithm}
\caption{Find optimal SVCE minimizing VMs}
\label{a:1}
\KwIn{Topology $T = (V, L)$, request $\mathcal{J} = \langle N, B \rangle$}
\KwOut{SVCE $(\mathcal{C}_s, \mathcal{B}_s)$}
\For{each node $v$ from bottom to top}{
	\eIf{$v$ is a leaf node} {
		\For{$n_0, n_1 \in \{ 0, \dots, N \}$} {
			Compute $N_v [n_0, n_1]$ as in Eq.~\eqref{eq:hnv}\;
		}
	} {
		Compute $N'_v [n_0, n_1, 0]$ as in Eq.~\eqref{eq:hnvp} for $\forall n_0, n_1$\;
		\For{$k=1$ to $d_v$} {
        			\For{$n_0, n_1 \in \{ 0, \dots, N \}$} {
        				Compute $N'_v [n_0, n_1, k]$ as in Eq.~\eqref{eq:snvp}\;
        			}
		}
		\For{$n_0, n_1 \in \{ 0, \dots, N \}$} {
			Compute $N_v [n_0, n_1]$ as in Eq.~\eqref{eq:snv}\;
		}
	}
}
\eIf{$N_r [N, N] = \infty$} {
	\Return{Infeasible.}
} {
	Backtrack the DP process to find allocations $\mathcal{C}_s(h)$ for $\forall h \in H$ and $\mathcal{B}_s(l_v)$ for $\forall v \in V$\;
}
\Return{SVCE $(\mathcal{C}_s, \mathcal{B}_s)$.}
\end{algorithm}

Algorithm~\ref{a:1} shows the whole procedure of the dynamic programming.
The algorithm first computes all the entries of $N_v[n_0, n_1]$ in a bottom-up manner.
The order of computation guarantees that during the computation of an entry in either $N_v$ or $N_v'$, all its depending entries have already been computed in previous iterations.
After computation, if the value $N_r[N, N]$ is feasible, the algorithm then backtracks the DP process to obtain the exact VM and bandwidth allocations in each subtree.
VM allocation can be determined by recording the path towards each entry in the table during DP, while bandwidth can be determined based on $n_0$ and $n_1$ due to Eq.~\eqref{eq:b}.

\noindent \subsection{Algorithm Analysis}

The following theorems deliver the optimality and the running time of our proposed algorithm.
\red{For clarity of presentation, we refer readers to the appendix for rigorous proofs of the theorems and lemmas in this subsection.}

\begin{lemma}
\label{l:1}
Given an allocation of VMs in any subtree $T_v$ for $\mathcal{J} = \langle N, B \rangle$, if the subtree can offer $n_v \in (N/2, N]$ working VMs in a scenario, then it can offer any number of working VMs less than or equal to $n_v^- = N - n_v$ in the same scenario without increasing bandwidth on any link.
\myendbox
\end{lemma}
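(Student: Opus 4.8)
The plan is to prove a slightly stronger, self-strengthening statement by bottom-up (structural) induction on the subtree $T_v$, because the ``easy monotone half'' is needed as an induction hypothesis to handle the ``hard half.'' Fix an allocation of VMs in $T_v$ and a scenario; say $T_v$ \emph{reaches} $k$ if it can activate $k$ working VMs in that scenario with the induced hose-model bandwidth $\min\{n_u,N-n_u\}\cdot B$ on every link $l_u$ of $T_v$ (including $l_v$). I would show, for every node $v$: (a) if $T_v$ reaches some $n_v\le N/2$, then it reaches every $m\in[0,n_v]$ without increasing bandwidth on any link of $T_v$; and (b) if $T_v$ reaches some $n_v\in(N/2,N]$, then it reaches every $m\in[0,N-n_v]$ without increasing bandwidth on any link of $T_v$. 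Part (b) is exactly Lemma~\ref{l:1}; ``without increasing bandwidth on any link'' automatically preserves feasibility against the originally allocated capacities, so no separate capacity check is needed.

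The base case is a single PM $h$: reaching $k$ means $k\le c_h$ and costs $\min\{k,N-k\}\cdot B$ on $l_h$, so for $m$ in the claimed range we have $m$ no larger than the old count and $\min\{m,N-m\}\cdot B=mB$ no larger than the old cost, settling (a) and (b) at once. For the inductive step, let $v$ be a switch with children $u_1,\dots,u_d$ and let $a_i$ be the number of active VMs inside $T_{u_i}$, so $\sum_i a_i=n_v\le N$. The crucial structural observation is that \emph{at most one} child can have $a_i>N/2$, since two such children would force $\sum_i a_i>N$. By the induction hypothesis applied to each child: a ``light'' child ($a_i\le N/2$) can be retuned to any $m_i\in[0,a_i]$, and the (at most one) ``heavy'' child can be retuned to any $m_i\in[0,N-a_i]$ (or kept at $a_i$), each without raising bandwidth inside that child. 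It then remains to pick $m_i$ in these per-child ranges with $\sum_i m_i=m$; since all subtree counts only decrease and $m\le N-n_v<N/2$, the bandwidth on $l_v$ also does not increase, as it was $\min\{n_v,N-n_v\}\cdot B\ge mB=\min\{m,N-m\}\cdot B$.

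The main obstacle — and the only place the hypothesis $m\le N-n_v$ is genuinely used — is verifying that such $m_i$ exist. If no child is heavy, the attainable sums form all of $[0,\sum_i a_i]=[0,n_v]\supseteq[0,N-n_v]$, so any $m\le N-n_v$ works. If one child $u_1$ is heavy, the attainable sums form $[0,\,(N-a_1)+\sum_{i\ge 2}a_i]=[0,\,N+n_v-2a_1]$, and one checks $N-n_v\le N+n_v-2a_1$ precisely because $a_1\le n_v$; hence every $m\le N-n_v$ is again attainable. This closes the induction, and part (b) at the root of $T_v$ is the assertion of Lemma~\ref{l:1}.
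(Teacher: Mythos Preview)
Your proof is correct and follows essentially the same approach as the paper: structural (level-wise) induction on $T_v$, together with the key observation that at most one child can be ``heavy'' ($a_i>N/2$), so one reduces the heavy child via the inductive hypothesis and the light children via the monotone ``easy half.'' The paper states your part~(a) as an informal preliminary and then inducts on part~(b) alone, whereas you package (a) and (b) into a single joint induction and compute the attainable interval $[0,\,N+n_v-2a_1]$ explicitly; this is a cleaner bookkeeping of the same argument rather than a different route.
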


\begin{lemma}
\label{l:2}
Given an allocation of VMs in any subtree $T_v$ for $\mathcal{J} = \langle N, B \rangle$, if the subtree can offer \emph{more than} $N$ working VMs in a scenario, then it can offer \emph{exactly} $N$ VMs in the same scenario, without increasing bandwidth on any link.
\myendbox
\end{lemma}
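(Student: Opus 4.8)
The plan is to prove the statement by repeatedly \emph{trimming} an oversized active selection down to exactly $N$ active VMs, one ``over-full'' subtree at a time, with Lemma~\ref{l:1} as the key primitive. Recall that ``$T_v$ can offer $n$ working VMs'' means the fixed VM placement in $T_v$ admits a choice of $n$ active VMs, available in the given scenario, such that every link $l_u$ internal to $T_v$ carries at most its allocated bandwidth; when the number $n_u$ of active VMs inside $T_u$ exceeds $N$ the constraint on $l_u$ is vacuous, so the only links imposing a real constraint are those with $n_u\le N$, on which the required bandwidth is $\min\{n_u,N-n_u\}\cdot B$.

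First I would record two observations about a single subtree $T_u$ carrying $n_u$ active VMs. (a) If $n_u\le N/2$ (a \emph{light} subtree), then every sub-subtree of $T_u$ also carries at most $N/2$ active VMs, so the bandwidth on each of its internal links equals $(\text{its count})\cdot B$, which is nondecreasing in that count; hence deactivating active VMs one at a time lets $T_u$ offer any number in $[0,n_u]$ without increasing bandwidth on any link. (b) If $N/2<n_u\le N$ (a \emph{heavy} subtree), Lemma~\ref{l:1} lets $T_u$ offer any number in $[0,N-n_u]$ without increasing bandwidth on any link (including $l_u$, where $\mathcal{B}(l_u)$ drops from $N-n_u$ to at most that); together with the trivial option of leaving $n_u$ unchanged, $T_u$ can be driven to any target in $\{n_u\}\cup[0,N-n_u]$.

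Given a valid active selection $W$ of size $n>N$ in $T_v$, I would pick an inclusion-minimal node $w$ of $T_v$ whose subtree currently carries $n_w>N$ active VMs ($v$ itself qualifies, so such $w$ exists). If $w$ is a PM, simply keep $N$ of its active VMs and we are done with this round; otherwise, by minimality every child $c$ of $w$ carries $n_c\le N$ and $\sum_c n_c=n_w>N$. I would then leave the selection untouched outside $T_w$ and re-choose per-child targets $m_c\in[0,n_c]$ with $\sum_c m_c=N$, realizing them via observation (a) for light children and (b) for heavy children, so that no link inside $T_w$ sees an increase. Every ancestor of $w$ in $T_v$ had count $\ge n_w>N$, so after shrinking $T_w$'s count to $N$ it still carries $\ge N$ VMs and its out-bound link stays under a vacuous/zero constraint; likewise $l_w$ goes from count $>N$ (vacuous) to count $N$ (zero required bandwidth). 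Thus no link increases. The new selection has size $n-(n_w-N)<n$; if it still exceeds $N$, repeat — the overshoot strictly decreases each round, so the process halts with a valid size-$N$ selection, the desired embedding offering exactly $N$ working VMs.

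The crux, and the step I expect to be the main obstacle, is showing the $m_c$ can always be chosen with $\sum_c m_c=N$. The difficulty is \emph{quantization}: a heavy child's count cannot be reduced by a small amount without its out-bound bandwidth rising (near the fold at $N/2$, $\min\{x,N-x\}$ grows as $x$ decreases), so each heavy child must either stay at $n_c$ or be collapsed below $N-n_c$. I would resolve this by sorting the heavy children in decreasing order of $n_c$, letting $A$ be the longest prefix whose sum is at most $N$ (nonempty whenever a heavy child exists, since each heavy $n_c\le N$), keeping the children in $A$ at full count, and shaving the rest (light children within $[0,n_c]$, leftover heavy children within $[0,N-n_c]$) to absorb the residual $N-\sum_{c\in A}n_c\ge 0$. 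A short calculation using $\sum_c n_c=n_w$ shows this residual never exceeds $\sum_{c\in\text{light}}n_c+\sum_{c\in\text{heavy}\setminus A}(N-n_c)$ — the inequality reduces to $|A|\ge 1$ — so a feasible distribution exists, completing the proof.
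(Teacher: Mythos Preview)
Your proof is correct and shares the paper's core idea --- descend to a node $w$ whose subtree is over-full but whose children are not, then rebalance among the children using Lemma~\ref{l:1} for the heavy ones --- but the execution diverges in two places. First, the paper recurses from $v$ toward the minimal over-full node while \emph{zeroing out every sibling subtree along the way}; since zeroing a subtree with count $\le N$ is free (by Lemma~\ref{l:1} or your observation (a)) and zeroing one with count $>N$ is trivially free, this lands on exactly $N$ in a single pass. You instead leave the off-path subtrees intact, which forces your iteration argument. Second, at $w$ the paper does a bare two-case split: if there are at least two heavy children, keep two of them (say with counts $n_u\ge n_w$) at $n_u$ and $N-n_u$ and zero the rest; if there is at most one heavy child, keep it and trim the light children down to $N-n_u$. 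Your longest-prefix scheme handles arbitrarily many retained heavy children and is correct (your inequality indeed reduces to $|A|\ge 1$, since the largest heavy child already dominates $n_{c^*}$), but two heavy children always suffice to hit $N$ exactly, so the extra generality buys nothing here. Net effect: same skeleton, but the paper's proof is shorter.
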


\begin{theorem}
\label{th:1}
Given an instance of SVCEP, Algorithm~\ref{a:1} returns the optimal solution if the instance is feasible, and returns ``Infeasible'' otherwise.
\myendbox
\end{theorem}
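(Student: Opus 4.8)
The plan is to prove, by structural induction over the tree, that the dynamic program computes exactly the optimal value of \textbf{SVCEP-GP} at every node, and then to connect SVCEP-GP at the root with $n_0=n_1=N$ to the original SVCEP via Lemma~\ref{l:2}. Concretely, I claim that for every $v$ and every $n_0,n_1\in\{0,\dots,N\}$, $N_v[n_0,n_1]$ (with $\infty$ read as ``infeasible'') equals the optimum of SVCEP-GP on $(v,n_0,n_1)$. For a leaf $h\in H$ this is direct: when $h$ fails all its VMs disappear, so feasibility forces $n_1=0$; and for $n_1=0$ the minimum number of VMs that keeps at least $n_0$ active is the smallest $m\ge n_0$ with $m\le c_h$ and $m\in\Lambda_h$ (the hose-model constraint on $l_h$), which is precisely $n_0$ when $n_0\in[0,c_h]\cap\Lambda_h$, the rounded-up value $\lambda_h$ when $n_0\in\overline\Lambda_h$ and $\lambda_h\le c_h$, and infeasible otherwise --- exactly \eqref{eq:hnv}.

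For a switch $v$ with children $u_1,\dots,u_{d_v}$, I first prove by an inner induction on $k$ that $N'_v[n_0,n_1,k]$ is the minimum number of VMs placed in the first $k$ child subtrees that guarantees at least $n_0$ active VMs in normal operation and at least $n_1$ active VMs under any single-PM failure located inside those $k$ subtrees (ignoring $b_v$). The base $k=0$ is \eqref{eq:hnvp}. For the step, the key observation is that a single failure among the first $k$ subtrees lies either in the first $k-1$ of them or in the $k$-th, and in either case the complementary part operates normally; writing $a_0',a_1'$ (resp. $a_0'',a_1''$) for the best normal and worst-failure active counts the first $k-1$ subtrees (resp. the $k$-th) can deliver under a given allocation, one gets $a_0'+a_0''\ge n_0$ and $\min\{a_0'+a_1'',\,a_1'+a_0''\}\ge n_1$, and by the induction hypothesis the VM count splits as at least $N'_v[a_0',a_1',k-1]+N_{u_k}[a_0'',a_1'']$; conversely any $(n_0',n_1',n_0'',n_1'')$ meeting those two inequalities with finite table entries is realized by juxtaposing the two witnessing allocations. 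Minimizing the sum of table values over exactly these inequalities is \eqref{eq:snvp}, and monotonicity of $N'_v$ and $N_{u_k}$ in both arguments together with Lemma~\ref{l:2} (active counts may be capped at $N$) justify restricting every index to $\{0,\dots,N\}$. Finally \eqref{eq:snv} imposes $b_v$: a valid allocation in $T_v$ activates some $\tilde n_0\in\Lambda_v$ normally, so guaranteeing ``at least $n_0$'' with $n_0\in\overline\Lambda_v$ forces reaching the least feasible count $\lambda_v$, and Lemmas~\ref{l:1} and~\ref{l:2} ensure that an allocation reaching $\lambda_v$ in fact sits in the feasible band $[\lambda_v,N]\subseteq\Lambda_v$ and still supplies every smaller feasible count; the four cases of \eqref{eq:snv} just apply this rounding to $n_0$ and $n_1$, inheriting optimality from the already-established $N'_v$ entries.

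It remains to pass from $N_r[N,N]$ to SVCEP. By the claim, $N_r[N,N]$ is the minimum number of VMs placed in $T=T_r$ that keeps at least $N$ active VMs (with a hose-feasible routing) in normal operation and under every single-PM failure $F\in H$. Any SVCE $(\mathcal C_s,\mathcal B_s)$ is such an allocation, since its VCEs witness ``at least $N$'' active VMs in every scenario; hence $N_r[N,N]$ is at most the VM count of any SVCE. Conversely, when $N_r[N,N]<\infty$ the backtracking step of Algorithm~\ref{a:1} expands the chain of minimizers into an explicit allocation $\mathcal C_s$ that by construction obeys $\mathcal C_s(h)\le c_h$, obeys the hose-model bound on every link (via \eqref{eq:snv} and \eqref{eq:b}, which also fixes $\mathcal B_s$), and keeps at least $N$ active VMs in every scenario; trimming each scenario's active set to exactly $N$ by Lemma~\ref{l:2} produces a genuine VCE, so $\mathcal C_s$ is a valid SVCE with VM count $N_r[N,N]$. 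The two bounds give $N_r[N,N]=$ the optimal SVCEP value, so the algorithm returns an optimal SVCE whenever $N_r[N,N]<\infty$. And if the instance is infeasible, no SVCE exists; since $N_r[N,N]<\infty$ would yield one via backtracking, we must have $N_r[N,N]=\infty$ and the algorithm correctly reports ``Infeasible''.

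The main obstacle is the inner-DP step for switches: making rigorous that the worst case over \emph{all} single-PM failures inside $T_v$ decomposes as the minimum of ``worst failure in the first $k-1$ subtrees with the $k$-th normal'' and its mirror image, while keeping the ``at least'' semantics consistent with the bandwidth rounding that \eqref{eq:snv} applies at every intermediate node and with capping active counts at $N$. This bookkeeping --- in particular, arguing via Lemmas~\ref{l:1} and~\ref{l:2} that restricting every index to $\{0,\dots,N\}$ discards no optimal solution --- is the part that needs care; the leaf case, the outer combination \eqref{eq:snv}, and the backtracking correctness are then routine.
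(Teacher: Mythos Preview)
Your proposal is correct and follows essentially the same approach as the paper: structural induction establishing that each $N_v[n_0,n_1]$ is the SVCEP-GP optimum, with an inner induction on $k$ for the $N'_v$ table using the failure-location decomposition, followed by the equivalence of SVCEP and SVCEP-GP at the root via Lemma~\ref{l:2}. If anything, your treatment is more careful than the paper's own proof --- you make explicit the role of Lemma~\ref{l:1} in justifying the rounding in~\eqref{eq:snv}, the capping of indices at $N$ via Lemma~\ref{l:2}, the monotonicity, and the infeasibility direction, all of which the paper either leaves implicit or omits.
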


\begin{theorem}
\label{th:2}
The worst-case time complexity of Algorithm~\ref{a:1} is bounded by $O(|V| \cdot N^6)$, where $|V|$ is the network size and $N$ is the request size (the number of VMs requested).
\myendbox
\end{theorem}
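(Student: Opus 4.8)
The plan is to charge the running time of Algorithm~\ref{a:1} to the individual nodes processed by its main bottom-up loop, sum the per-node costs over the tree, and then verify that the backtracking phase does not dominate.

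First I would bound the per-node cost for a leaf $h \in H$. The algorithm fills $N_h[n_0,n_1]$ for every pair $(n_0,n_1) \in \{0,\dots,N\}^2$, and by Eq.~\eqref{eq:hnv} each of these $O(N^2)$ entries is computed in $O(1)$ time once $c_h$, $\Lambda_h$ and $\lambda_h$ are known (these depend only on $b_h$, $B$, $N$ and are available from the closed forms in Section~\ref{sec:model}). Hence a leaf contributes $O(N^2)$. Next comes the dominant term, the per-node cost for a switch $v \in S$. Initializing $N'_v[\cdot,\cdot,0]$ via Eq.~\eqref{eq:hnvp} costs $O(N^2)$. Then, for each $k = 1,\dots,d_v$ and each $(n_0,n_1) \in \{0,\dots,N\}^2$, evaluating Eq.~\eqref{eq:snvp} amounts to a minimization over the four indices $n_0',n_1',n_0'',n_1''$, each ranging over $\{0,\dots,N\}$; since the summands $N'_v[n_0',n_1',k-1]$ and $N_{u_k}[n_0'',n_1'']$ are already tabulated and the feasibility conditions $n_0'+n_0'' \ge n_0$ and $\min\{n_0'+n_1'',\, n_0''+n_1'\} \ge n_1$ are checked in $O(1)$, this is a minimum over $O(N^4)$ candidates, costing $O(N^4)$ per entry. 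Multiplying by the $O(N^2)$ pairs $(n_0,n_1)$ gives $O(N^6)$ per value of $k$, hence $O(d_v N^6)$ for the inner loop; the final conversion to $N_v[\cdot,\cdot]$ via Eq.~\eqref{eq:snv} adds only $O(N^2)$. So a switch $v$ contributes $O(d_v N^6)$.

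Summing over all nodes, the leaves contribute $\sum_{h \in H} O(N^2) = O(|V| N^2)$ and the switches contribute $\sum_{v \in S} O(d_v N^6) = O\!\big(N^6 \sum_{v \in S} d_v\big) = O(|V| N^6)$, using the handshake identity $\sum_{v \in V} d_v = |V| - 1$ for a rooted tree (each non-root node is the child of exactly one node). For the feasible case it remains to account for the backtracking: walking back down the DP visits each node once, and at each node recovering the split that attained the optimum for the relevant $(n_0,n_1)$ costs no more than re-evaluating the corresponding minimization, i.e. $O(d_v N^6)$ (or $O(N^2)$ per node if the argmin is cached during the forward pass); either way this is $O(|V| N^6)$. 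Adding the three contributions yields the claimed $O(|V| N^6)$ bound.

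The only place needing care is the switch analysis: one must observe that Eq.~\eqref{eq:snvp} is a minimization over \emph{four} independent indices in $\{0,\dots,N\}$ — so $O(N^4)$, not $O(N^2)$, candidates per table entry — repeated over all $O(N^2)$ pairs $(n_0,n_1)$ and all $d_v$ children, and that $\sum_v d_v = |V| - 1$ collapses the outer sum to a single factor of $|V|$. Everything else is routine bookkeeping of $O(1)$ look-ups and additions. (One could remark that exploiting the monotonicity of $N'_v$ and $N_{u_k}$ in their arguments would shave the $N^6$ factor, but this is unnecessary for the stated upper bound.)
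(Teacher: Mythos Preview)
Your proposal is correct and follows essentially the same approach as the paper: charge $O(N^2)$ per PM, $O(d_v N^6)$ per switch (from $O(N^2)$ table entries each requiring an $O(N^4)$ minimization over $n_0',n_1',n_0'',n_1''$, repeated for $d_v$ children), sum over all nodes using $\sum_v d_v = |V|-1$, and observe that backtracking is dominated by the forward DP. Your write-up is in fact more explicit than the paper's --- you spell out the handshake identity and the backtracking cost, both of which the paper simply asserts.
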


Based on Theorem~\ref{th:1}, the solution is guaranteed to offer a feasible VCE of $\mathcal{J} = \langle N, B \rangle$ using the allocated resources when facing any single PM failure.
To find the RWS for each failure, one can apply existing VCE algorithms~\cite{Zhu2012a, Xie2012} in the auxiliary topology where VM slots and bandwidth are the same as allocated except for the failed PM.
As mentioned in Section~\ref{sec:surv}, the RWS of any failure can be used as the PWS.

\section{Efficient Heuristic}
\label{sec:heu}

\noindent The algorithm proposed in Section~\ref{sec:a} optimally solves SVCEP.
However, its worst-case time complexity can be as high as $\Theta(|V| \cdot N^6)$, which may be too expensive when a tenant asks for many VMs.
In this section, we propose an efficient heuristic algorithm that runs in \cmts{$O(N \cdot |V| \log |V|)$} time.

Before the algorithm, we first state the following lemma, \red{whose proof is also detailed in the appendix:}
\begin{lemma}
\label{l:heu}
Given substrate $T$, request $\mathcal{J} = \langle N, B \rangle$, and an integer $N' \in [1, N]$, a VCE of the augmented request $\mathcal{J}' = \langle N+N', B \rangle$ yields a feasible SVCE of $\mathcal{J}$ as long as each PM is allocated with no more than $N'$ VMs in the VCE.
\myendbox
\end{lemma}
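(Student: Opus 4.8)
The plan is to turn the given embedding of $\mathcal{J}' = \langle N+N', B\rangle$ into an SVCE of $\mathcal{J}$ by \emph{reusing the very same reservation}. Let $\mathcal{C}'$ be the VM allocation of the assumed VCE of $\mathcal{J}'$ (so $\sum_{h}\mathcal{C}'(h) = N+N'$, $\mathcal{C}'(h)\le c_h$, and for every node $v$, with $n'_v := \sum_{h\in T_v\cap H}\mathcal{C}'(h)$, the constraint $\min\{n'_v, (N+N')-n'_v\}\cdot B \le b_v$ holds). I would set $\mathcal{C}_s := \mathcal{C}'$ and $\mathcal{B}_s(l_v) := \min\{n'_v, (N+N')-n'_v\}\cdot B$. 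Substrate feasibility of this reservation ($\mathcal{C}_s(h)\le c_h$ and $\mathcal{B}_s(l_v)\le b_v$) is immediate from the three properties of a VCE applied to $\mathcal{J}'$, so the whole burden is to show that for every single-PM failure $F\in H$ there is a VCE of $\mathcal{J}$ inside the residual resources $c^{\mathcal{J},F}_h = \mathcal{C}_s(h)$, $b^{\mathcal{J},F}_v = \mathcal{B}_s(l_v)$.

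Fix $F\in H$. The recovery VCE $\mathcal{C}^F$ is built by \emph{deactivating exactly $N'$ VM slots}: all $\mathcal{C}'(F)$ slots on the failed PM, plus an arbitrary further $N'-\mathcal{C}'(F)$ slots chosen among the other PMs. This is possible because $\mathcal{C}'(F)\le N'$ (the per-PM cap hypothesis, which is used exactly here), and because the non-$F$ PMs still carry $(N+N')-\mathcal{C}'(F)\ge N'-\mathcal{C}'(F)$ slots. Setting $\mathcal{C}^F(h) := \mathcal{C}'(h) - (\text{number of slots on }h\text{ deactivated})$ then gives $0\le \mathcal{C}^F(h)\le \mathcal{C}'(h) = \mathcal{C}_s(h)$ for all $h\ne F$ and $\sum_{h\ne F}\mathcal{C}^F(h) = (N+N')-N' = N$, so properties (1) and (3) of a VCE of $\mathcal{J}$ in $T^{\mathcal{J},F}$ hold for any such choice of which slots to switch off.

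The only property that uses the hose model is the bandwidth constraint (2) on each link $l_v$ of $T^{\mathcal{J},F}$ (i.e., for every $v\ne F$). Write $n^F_v := \sum_{h\in T_v\cap H,\, h\ne F}\mathcal{C}^F(h)$ and let $q_v$ be the number of deactivated slots lying in $T_v$; since only $N'$ slots were deactivated in total, $0\le q_v\le N'$, and $n^F_v = n'_v - q_v$. Hence $n^F_v \le n'_v$ and $N - n^F_v = (N - n'_v) + q_v \le (N+N') - n'_v$, while both $n^F_v$ and $N-n^F_v$ are nonnegative ($n^F_v$ is a partial sum of an allocation totalling $N$). Because $\min$ of nonnegative reals is monotone in each argument, $\min\{n^F_v, N-n^F_v\} \le \min\{n'_v, (N+N')-n'_v\}$, and therefore $\mathcal{B}(l_v) = \min\{n^F_v, N-n^F_v\}\cdot B \le \mathcal{B}_s(l_v) = b^{\mathcal{J},F}_v$. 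Thus $\mathcal{C}^F$ is a VCE of $\mathcal{J}$ within the residual resources, and since $F$ was arbitrary, $(\mathcal{C}_s,\mathcal{B}_s)$ is a feasible SVCE of $\mathcal{J}$.

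I do not expect a genuine obstacle: the heart of the argument is simply that deleting VMs can only shrink the ``inside'' count $n_v$ of any subtree and inflate its complement $N-n_v$ by at most the global deactivation budget $N'$, so the worst-case hose bandwidth $\min\{n_v,N-n_v\}$ on every surviving link can only decrease relative to the reserved $\mathcal{J}'$-value. The only things to be careful about are bookkeeping—that the single budget of $N'$ deactivated slots yields $q_v\le N'$ for \emph{every} subtree at once—and the small feasibility checks (enough slots on non-$F$ PMs to deactivate; nonnegativity of all four terms entering the two $\min$'s). As an alternative to counting, one could invoke Lemma~\ref{l:2} to normalise a ``more than $N$'' count down to ``exactly $N$'' without raising bandwidth, but the direct slot-counting route above is cleaner and self-contained.
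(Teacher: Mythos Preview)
Your argument is correct, and it is genuinely different from the paper's proof. The paper does not fix a concrete recovery allocation by ``switching off'' $N'$ slots; instead, for a failure $F$ it cuts the $K$ links on the root--to--$F$ path, partitions the remaining tree into components $C_0,\dots,C_{K-1}$, argues separately that (i) links \emph{inside} each component and (ii) the cut links $(s_{k-1},s_k)$ each carry enough reserved bandwidth, greedily accumulates components until at least $N$ allocated VMs are collected, and finally invokes Lemma~\ref{l:2} to pass from ``at least $N$'' to ``exactly $N$'' working VMs without increasing any link's bandwidth. Your route is more elementary: by deleting exactly $N'$ slots (all of $F$'s among them) you get a global bound $0\le q_v\le N'$ for every subtree simultaneously, and the single componentwise inequality $\min\{n'_v-q_v,\ N-(n'_v-q_v)\}\le \min\{n'_v,\ (N+N')-n'_v\}$ handles every surviving link at once. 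This avoids the path decomposition and the dependence on Lemma~\ref{l:2}, making the proof self-contained; the paper's version, on the other hand, is constructive in a different sense in that it exhibits a specific pruned subtree $T'$ hosting the recovery VCE rather than an arbitrary slot deletion.
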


Based on Lemma~\ref{l:heu}, we design a heuristic algorithm shown in Algorithm~\ref{a:2}, based on the algorithm in~\cite{Zhu2012a}.
\begin{algorithm}
\caption{Heuristic for finding SVCE}
\label{a:2}
\KwIn{Topology $T = (V, L)$, request $\mathcal{J} = \langle N, B \rangle$}
\KwOut{SVCE $(\mathcal{C}_s, \mathcal{B}_s)$}
\For{$N'=1$ to $N$} {
	$\mathcal{C}_s, \mathcal{B}_s \leftarrow FindFeasibleVCE(T, \langle N+N', B \rangle)$;
	
	\If{$(\mathcal{C}_s, \mathcal{B}_s)$ is a feasible solution} {
		\Return{SVCE $(\mathcal{C}_s, \mathcal{B}_s)$.}
	}
}
\Return{Infeasible.}
\end{algorithm}

The algorithm iterates from $N'=1$ to $N$, each time calling $FindFeasibleVCE$ to find a feasible VCE for the augmented request $\mathcal{J}' = \langle N+N', B \rangle$. %
If such feasible embedding is found at a specific value of $N'$, the total number of VMs provisioned is exactly $N+N'$, hence yielding a solution with minimum VMs under this algorithm.
The algorithm stops when no solution is found in $N$ iterations, because the number of backup VMs should not exceed the number of requested VMs.

The subroutine $FindFeasibleVCE$ uses the algorithm in~\cite{Zhu2012a} to find a VCE for $\mathcal{J}'$.
Minor modification is done to enforce the per-PM VM limit $N'$.
Due to page limit, we only briefly introduce the idea of the algorithm.
For each node, the algorithm uses a data structure called the \emph{Allocation Range} (AR) to record the number of VMs that can be allocated within its subtree.
The AR consists of $N+N'+1$ bits for request $\mathcal{J}'$, where the $(i+1)$-th bit is $1$ if the subtree can offer $i$ VMs, and $0$ otherwise.
Continuous $1$ bits are aggregated into sections defined by both end-points.
For example, a section $[5, 7]$ means that the subtree can offer $5$, $6$ or $7$ VMs.
The algorithm progressively computes the AR of every node, from leaves to root.
It then finds the lowest subtree that can offer $N+N'$ VMs, and makes allocation through backtracking.

\cmts{
The algorithm in~\cite{Zhu2012a} finds a VCE within $O(|V| \log |V|)$ time.
Algorithm~\ref{a:2} calls $FindFeasibleVCE$ for at most $N$ times, hence it has time complexity $O( N \cdot |V| \log |V| )$.
}

Algorithm~\ref{a:2} does not guarantee optimality.
In fact, we can construct simple examples for which it fails to find an SVCE, yet one with the optimal objective can be found by our optimal algorithm.
Due to page limit, we omit the examples here.

However, as shown in Section~\ref{sec:eval}, this heuristic algorithm has similar performance to the per-request optimal solution proposed in Section~\ref{sec:a} when working in the online manner, but is several orders more time-efficient.
Therefore it is practically important for providing fast response to tenants in the cloud.

\section{Performance Evaluation}
\label{sec:eval}

\subsection{Baseline Algorithm}

\noindent \emph{Shadow-based solution} (SBS) is a well-known failover provisioning solution for VM management~\cite{Bin2011}.
In SBS, each primary VM is protected by a dedicated backup VM (called \emph{shadow}).
Different primary VMs do not share any common backup VM.
To employ SBS for VCs, both VMs and bandwidth need to be shadowed.
We designed a heuristic bandwidth-aware algorithm for SBS as our baseline algorithm.
It works as follows: given a request $\mathcal{J} = \langle N, B \rangle$, the algorithm seeks to find one primary VCE, as well as one shadow VCE, on two disjoint sets of PMs respectively.
When making the primary VCE, the algorithm seeks to minimize the PMs used, using a modified algorithm as in~\cite{Yu}, therefore leaving more room for the shadow.
A request is accepted only when the network accommodates both the primary VCE and the shadow.

We compared our proposed algorithms (OPT for the optimal algorithm and HEU for the heuristic algorithm) to this baseline algorithm (SBS) to show how resources are conserved to serve more requests by our optimization algorithms.

\subsection{Evaluation Metrics}

\begin{itemize}
\item \emph{Acceptance ratio} is the number of fulfilled requests over total requests, which directly reflects an algorithm's capability in serving as many requests as possible.
\item \emph{Average VM consumption ratio} is defined as the average ratio of actual VM slot consumption over the requested VMs, namely $\mathcal{R}_{\mathcal{J}} = (\sum\nolimits_{h} {\mathcal{C}_s(h)}) / N$, for each request. Note that this only counts those requests accepted by all three algorithms, in order to make fair comparison.
\item \emph{Average running time} reflects how much time an algorithm spends in average to determine a solution (or rejection) of each incoming request.
\end{itemize}

\subsection{Experiment Settings}

\noindent \red{We developed a C++-based simulator to evaluate our proposed algorithms.}
The substrate was simulated as a $4$-layer $8$-ary tree, including the PMs.
Each PM has $5$ VM slots, and $8$ PMs are connected to a Top-of-Rack (ToR) switch each via a $1$Gbps link.
$8$ ToR switches are connected to one aggregation switch, and $8$ aggregation switches to the core, both via $10$Gbps links.

We conducted experiments in two scenarios: the static scenario and the dynamic scenario.
In the static scenario, we used the same network information and the same tenant request in each experiment; \red{hence no resource was reserved after the acceptance of a request.}
To simulate realistic network states, we randomly generated load on PMs and links.
Specifically, given a load factor $\alpha$, we randomly occupied a fraction of the VM slots on each PM and bandwidth on each link, according to a normal distribution with mean of $\alpha$ and standard deviation of $\min \{ \alpha, 1.0 - \alpha \}$.
We then randomly generated $1000$ tenant requests each requesting $15$ VMs and $200$Mbps per-VM bandwidth on average with a normal distribution, and tested each of them on the network with random load.

In the dynamic scenario, 
we generated randomly arriving tenant requests, and embedded them in the initially unoccupied network in an online manner.
In each experiment $1000$ tenant requests were generated, which arrive in a Poisson process with mean arrival interval of $15$ and mean lifetime of $2000$.
Each request asks for $15$ VMs and $300$Mbps per-VM bandwidth on average, generated with a normal distribution.
\red{Resources were reserved after the acceptance of a request, hence existing VCs in the system would have impact on the embedding of future incoming VCs.}
Each experiment was repeated for \cmts{20} times in the same setting, and the results were averaged over all runs.

\red{In both scenarios, we varied one system parameter in each series of experiments, while keeping other parameters as default.}
Experiments were run on a Ubuntu Linux PC with Quad-Core 3.4GHz CPU and 16GB memory.

\subsection{Evaluation Results}

\subsubsection{Static Experiments}

\begin{figure}[ht!]
\vspace{-0.5em}
\centering
\subfloat[Acceptance ratio]{\includegraphics[height=0.155\textwidth]{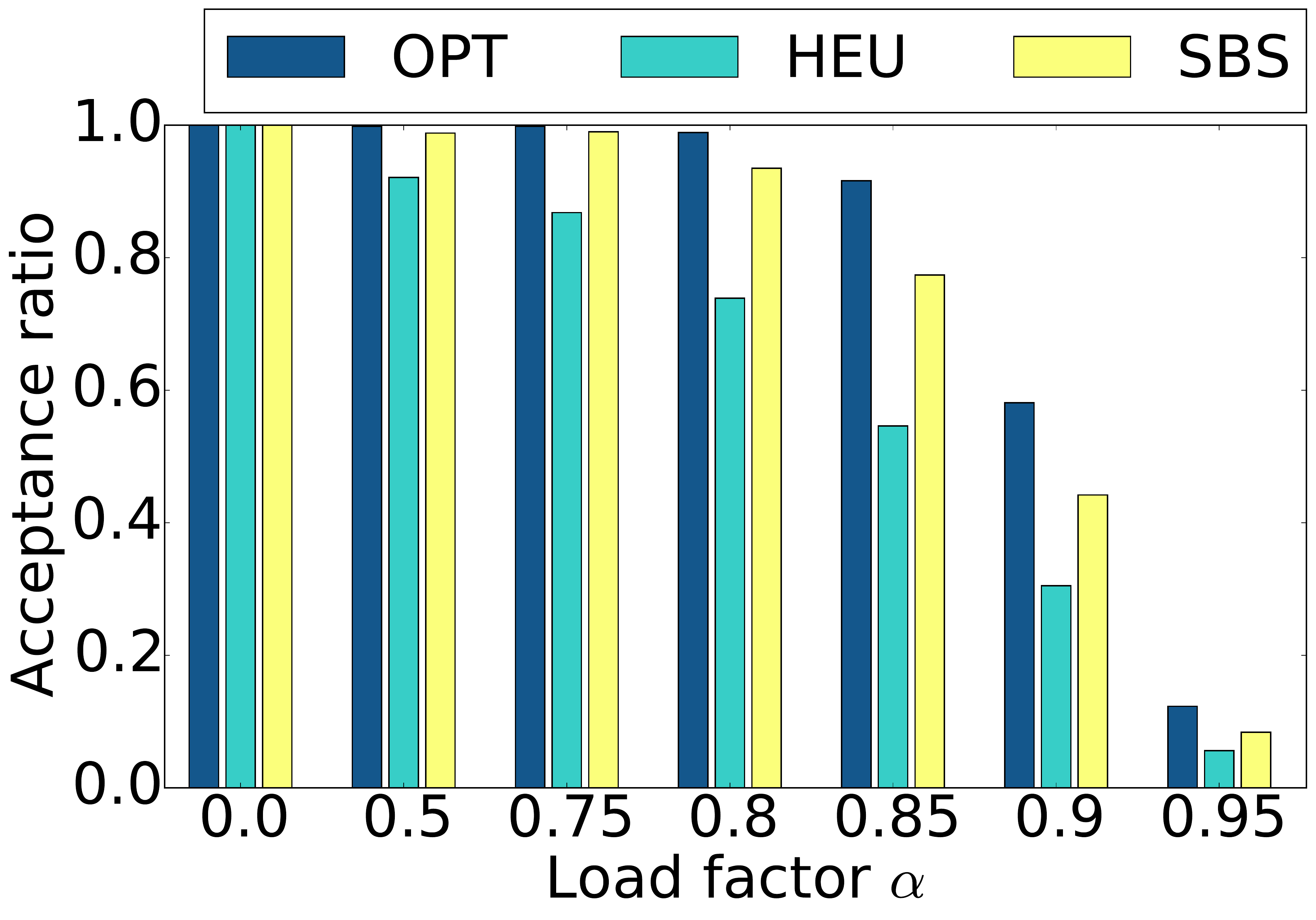}
\label{fig:load_acc}}
\hfil
\subfloat[Average VM consumption ratio]{\includegraphics[height=0.155\textwidth]{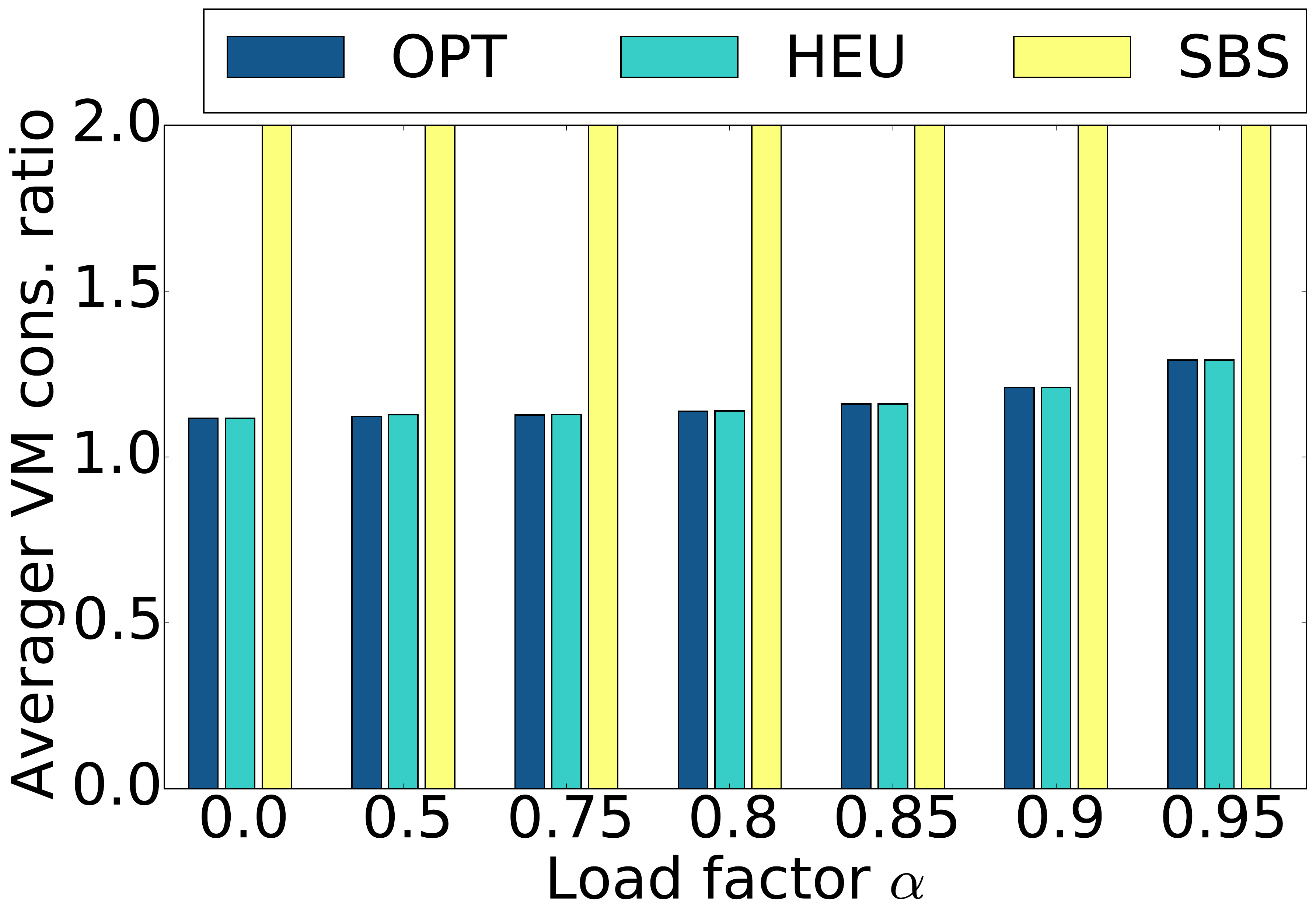}
\label{fig:load_vms}}
\caption{Static experiment with varying network load.}
\label{fig:load}
\end{figure}

Fig.~\ref{fig:load} shows the acceptance ratio and the VM consumption ratio with increasing network load (overall bandwidth consumption is similar to Fig.~\ref{fig:load}\subref{fig:load_vms} and is not shown due to page limit).
We observed that the OPT algorithm outperforms both HEU and SBS in terms of both number of requests accepted and the per-request VM consumption ratio, due to its optimality.
On the other hand, Fig.~\ref{fig:load}\subref{fig:load_acc} shows that the HEU algorithm performs less preferably than the SBS baseline in terms of acceptance ratio, when the network is loaded.
Further analysis reveals that HEU commonly requires more bandwidth from upper layer links, which are heavily congested by the random load, hence HEU's acceptance ratio is affected.
However, as can be observed in Fig.~\ref{fig:load}\subref{fig:load_vms}, HEU consumes much less VMs than SBS per accepted request.
Due to this, it is more likely for HEU to receive better performance when employed as an online scheduler, due to its capability in conserving cloud resources.
As will be shown next, HEU indeed outperforms SBS greatly in the dynamic experiments.
SBS always consumes $2\times$ the VMs requested, as it provisions an entire duplicate of the primary VMs.
Per-request VM consumption increases slightly with the increasing load, due to that it is harder to find a survivable embedding with few backup VMs when the network is short of bandwidth.
\subsubsection{Dynamic Experiments}
\begin{figure}[ht!]
\vspace{-0.5em}
\centering
\subfloat[Acceptance ratio]{\includegraphics[height=0.155\textwidth]{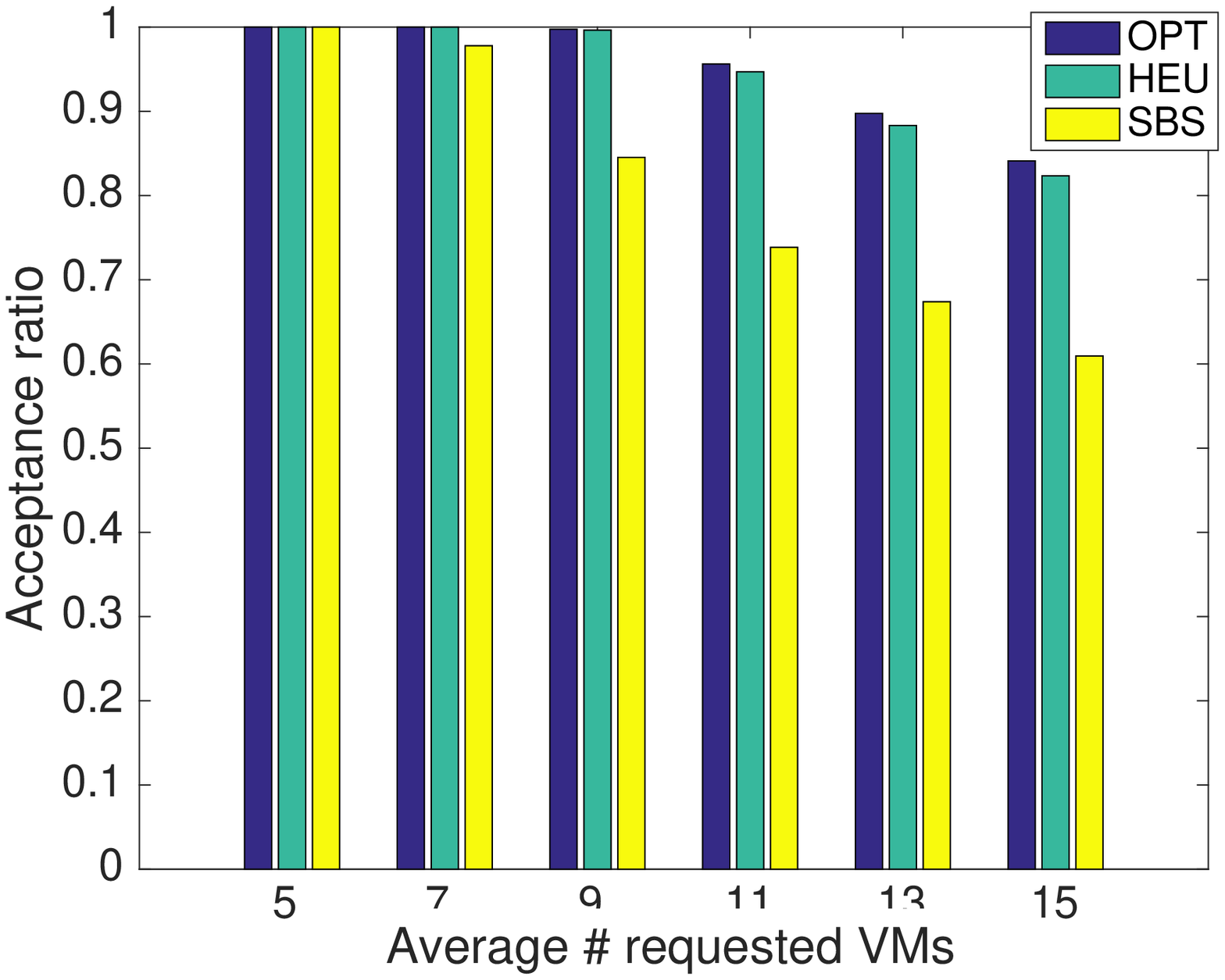}
\label{fig:vms_acc}}
\hfil
\subfloat[Average running time (log scale)]{\includegraphics[height=0.155\textwidth]{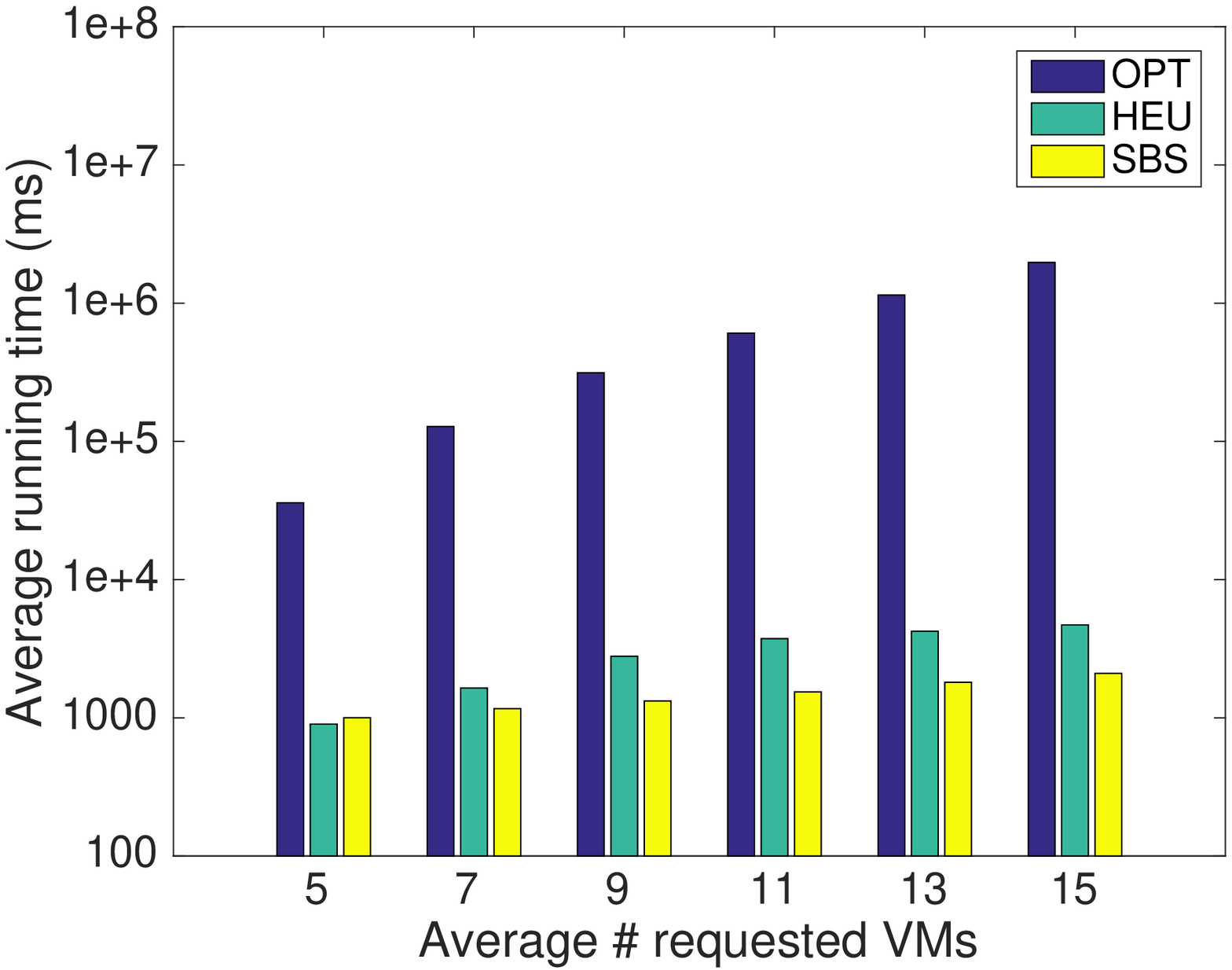}
\label{fig:vms_tm}}
\caption{Dynamic experiment with varying VMs per request}
\label{fig:vms}
\end{figure}

Fig.~\ref{fig:vms} shows the experiment results with varying average number of requested VMs per tenant request.
OPT obviously achieves the best acceptance ratio in all scenarios, while HEU's acceptance ratio is only slightly lower than OPT.
Both algorithms have much higher acceptance ratio compared to the SBS baseline, due to their capability to conserve VM (and bandwidth) resources per tenant request.
Meanwhile, HEU has much shorter running time than OPT in all cases due to its low time complexity, and is only a little worse than SBS in most scenarios.
As each tenant asking for more VMs, acceptance ratio drops while running time increases.
This is due to that the running time of all algorithms are related to the per-tenant request size $N$.
\begin{figure}[ht!]
\vspace{-0.3em}
\centering
\subfloat[Acceptance ratio]{\includegraphics[height=0.155\textwidth]{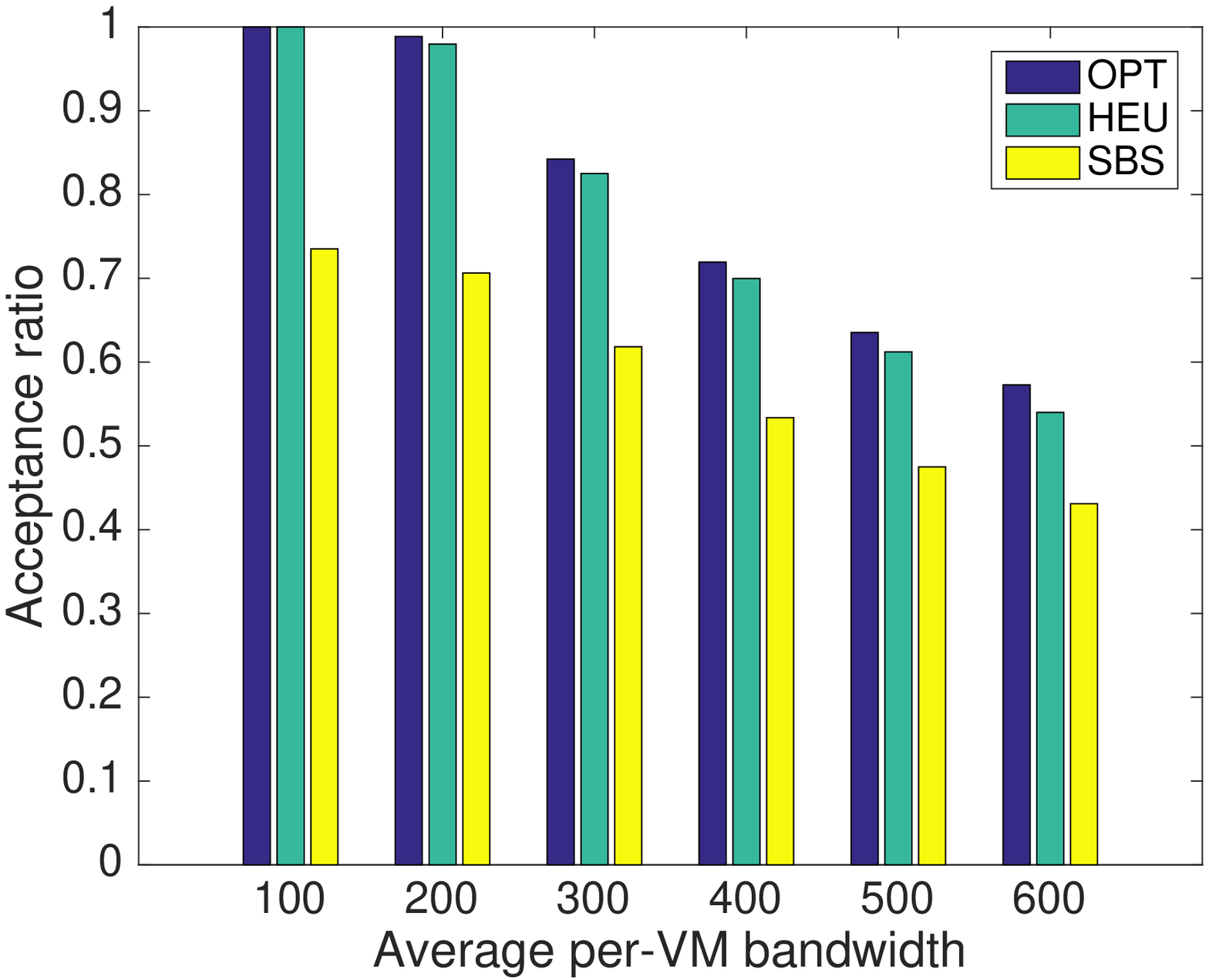}
\label{fig:dmd_acc}}
\hfil
\subfloat[Average running time (log scale)]{\includegraphics[height=0.155\textwidth]{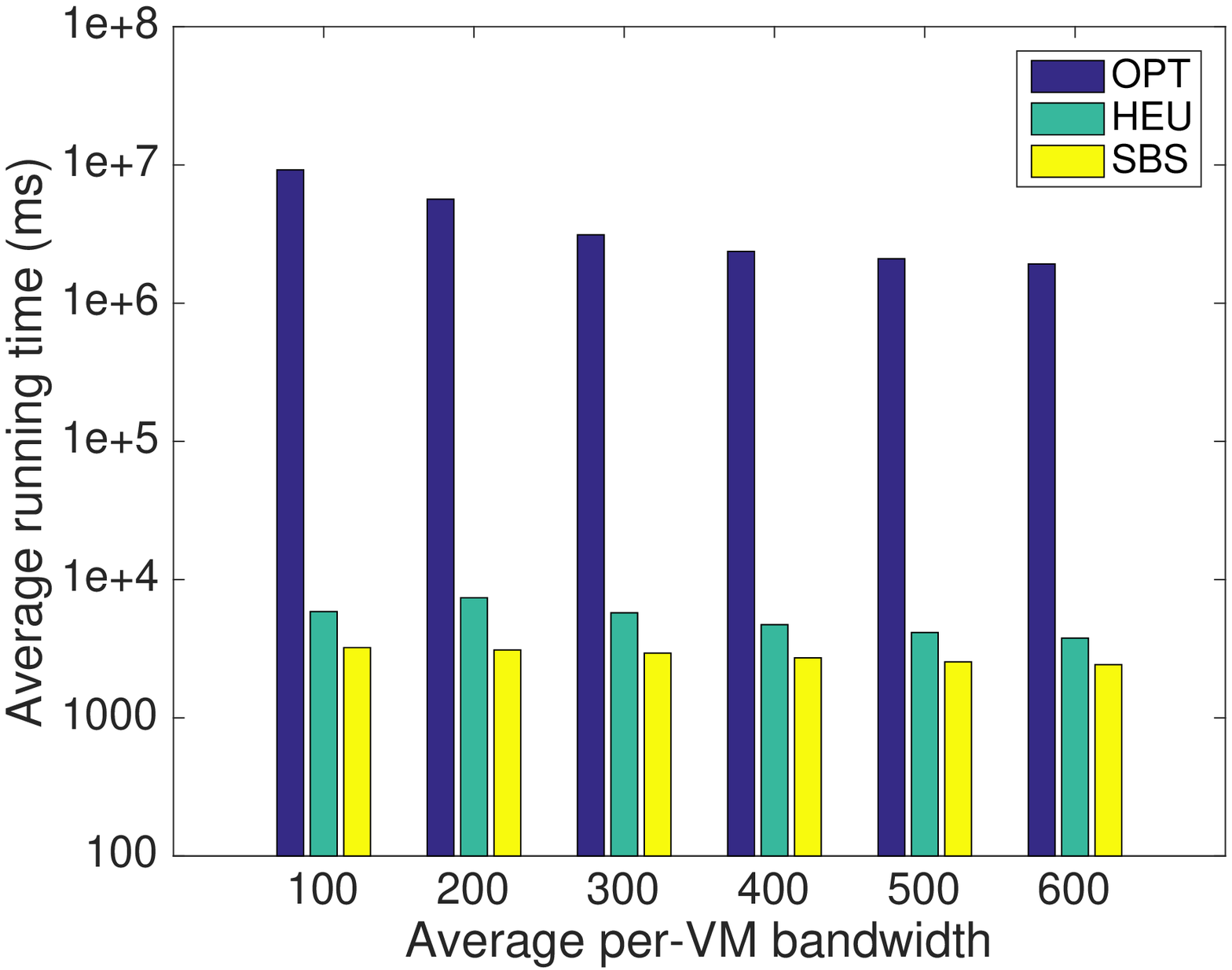}
\label{fig:dmd_tm}}
\caption{Dynamic experiment with varying per-VM bandwidth}
\label{fig:dmd}
\end{figure}

Fig.~\ref{fig:dmd} shows the experiment results with varying average per-VM bandwidth.
The acceptance ratio results show similar pattern as in Fig.~\ref{fig:vms}, where OPT performs the best, HEU performs slightly worse, and SBS performs much worse compared to the former two.
Acceptance ratio drops as per-VM bandwidth increases.
As for running time, clearly HEU and SBS are both much better than OPT due to their low complexity.
Unlike Fig.~\ref{fig:vms}, running time drops as per-VM bandwidth increases.
It has two reasons.
First, the worst-case time complexity of each algorithm is not related to the per-VM bandwidth.
Second, as per-VM bandwidth increases, the search spaces decrease due to more consumed network resources.

\begin{figure}[ht!]
\vspace{-0.5em}
\centering
\subfloat[Acceptance ratio]{\includegraphics[height=0.155\textwidth]{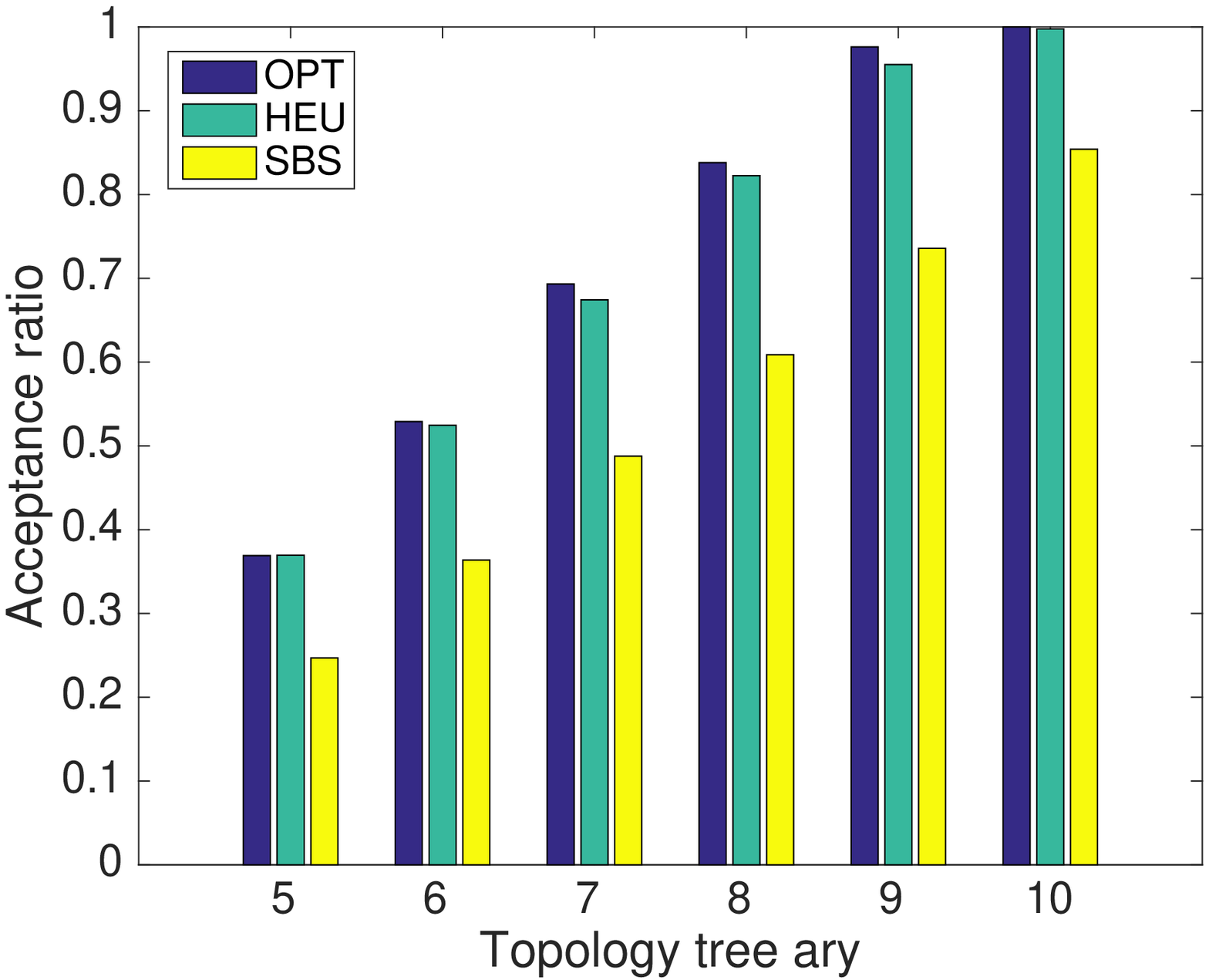}
\label{fig:ary_acc}}
\hfil
\subfloat[Average running time (log scale)]{\includegraphics[height=0.155\textwidth]{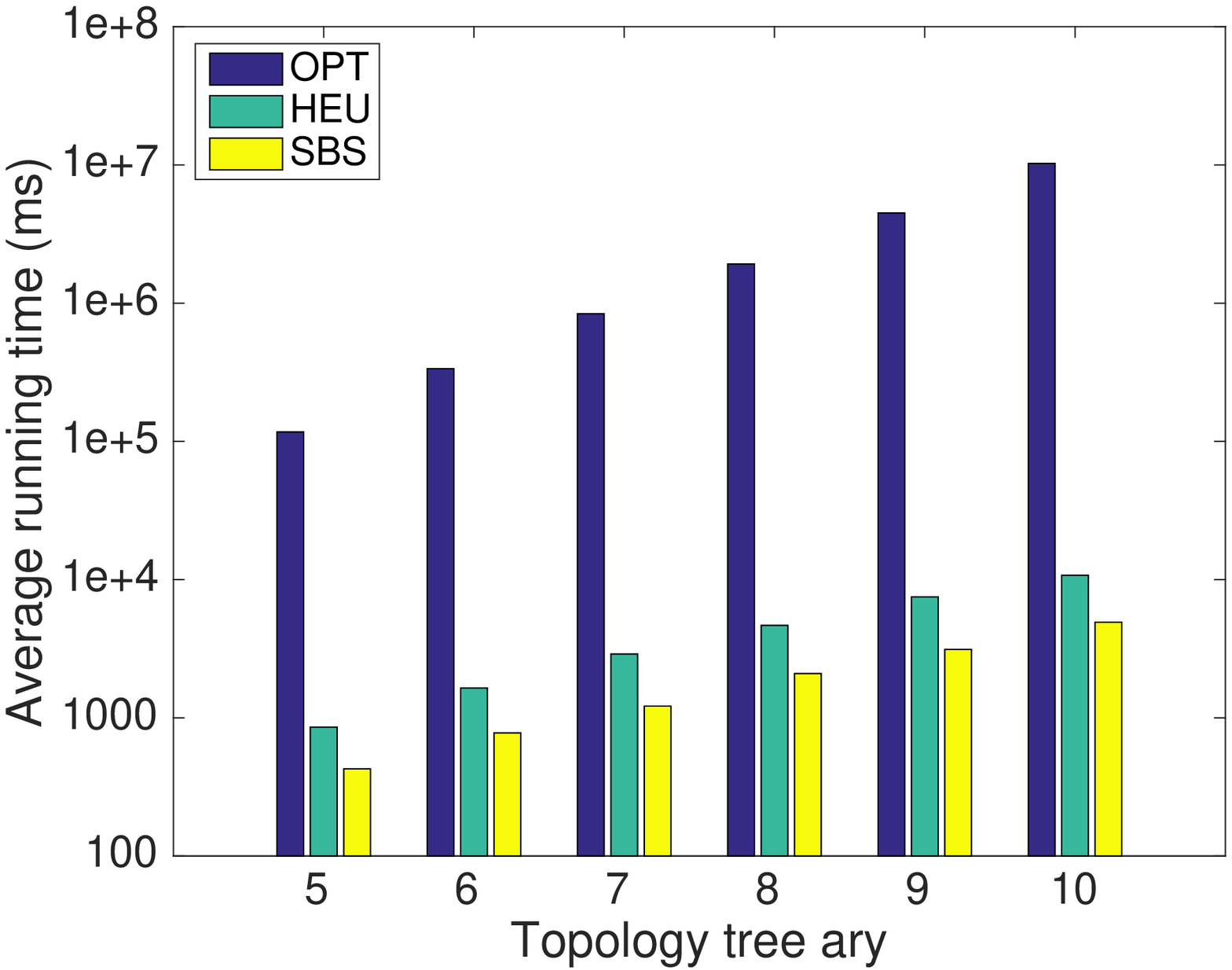}
\label{fig:ary_tm}}
\caption{Dynamic experiment with varying network size}
\label{fig:ary}
\end{figure}

In the last set of experiments, we varied the network size. %
Each topology is a $4$-level $k$-ary tree, which has $k^3$ PMs, and $k^2+k^1+1$ switches.
We varied $k$ from $5$ to $10$.
With a larger network size (and thus more VMs and bandwidth), the acceptance ratio of all algorithms increases in Fig.~\ref{fig:ary}\subref{fig:ary_acc}.
OPT and HEU both outperform SBS due to resource conservation.
The running time of all algorithms increase with the tree ary number in Fig.~\ref{fig:ary}\subref{fig:ary_tm}.
As the network itself grows linearly in the logarithmic scale, all algorithms show linear or nearly linear running time growth.

We summarize our findings as follows:
\begin{enumerate}
\item OPT guarantees per-request optimality, and has the best performance in both static and dynamic scenarios; HEU shows low acceptance ratio in the static case, but has much higher acceptance ratio in the dynamic case due to resource conservation; SBS consumes too much resources and hence performs the worst in the dynamic case.
\item Compared to OPT, HEU has much better time efficiency, which is a great advantage in practice; however, OPT is still important when 1) tenant requests are small in general, 2) cloud resources are very scarce, or 3) future researches along the same line need to compare with a theoretically (per-request) optimal solution.
\end{enumerate}
\red{
\section{Discusssions}
\label{sec:disc}

\noindent\textbf{Resource optimization:}
Our current solutions focus on minimizing the number of backup VMs.
However, they can be extended to other objectives, such as minimization of total bandwidth.
Specifically, instead of the minimum number of VMs, the minimum bandwidth to achieve a specific $\langle n_0, n_1 \rangle$ pair is computed for each node.
The aggregation process incorporates the bandwidth consumed both in lower levels and in the current level.
We omit more details due to page limit.

\noindent\textbf{Simultaneous PM failures:}
Our proposed algorithms protect from any single PM failure in the substrate.
They can be extended to cover multiple simultaneous PM failures, at the cost of exponentially increased time complexity regarding the number of failures to be covered.
Specifically, the extension involves adding $\kappa-1$ dimensions into the dynamic programming, where $\kappa$ is the number of covered simultaneous failures.
As our future work, more efficient algorithms for covering multiple simultaneous PM failures are to be developed.

\vspace{0.1in}
\begin{figure}[h!]
\centering
\includegraphics[width=0.48\textwidth]{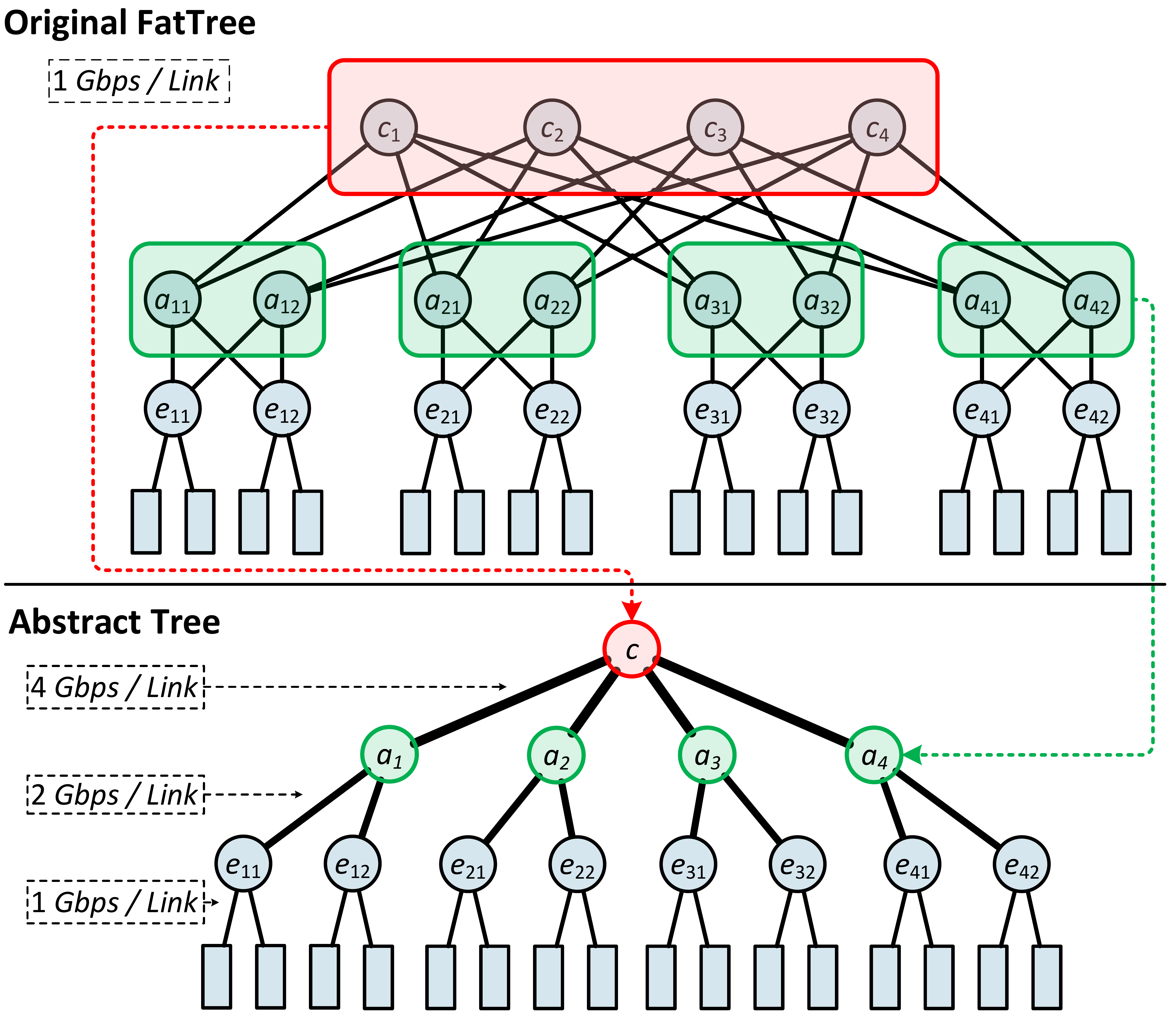}
\caption{Tree abstraction (bottom) of $4$-ary FatTree (top).}
\label{fig:fattree}
\end{figure}

\noindent\textbf{Data center topologies:}
As aforementioned, our solutions can be applied to generic tree-like topologies with simple abstractions.
To support our argument, an example is shown for the widely adopted FatTree topology~\cite{Al-Fares2008} in Fig.~\ref{fig:fattree}.
A $4$-ary FatTree topology is shown on the top, which is then abstracted as the virtual tree topology on the bottom.
Switches or links connected to the same set of lower layer nodes are aggregated into a single abstract switch or link; link capacities are also aggregated.
As the majority of data center traffic consists of small flows, we can assume arbitrary splitting of traffic between different VM pairs; hence any bandwidth allocation feasible on the aggregated topology can be successfully configured on the original topology as well.
Other topologies feasible for adopting such abstraction include VL2~\cite{Greenberg2009} and other multi-rooted tree-based topologies.
Survivable VCE for more general data center topologies is among our future directions.
}

\section{Conclusions}
\label{sec:conclusions}

\noindent In this paper, we studied survivable VC embedding with hose model bandwidth guarantee.
We formally defined the problem of minimizing VM consumption for providing survivability guarantee.
To solve the problem, we proposed a novel dynamic programming-based algorithm, with worst-case time complexity polynomial in the network size and the number of requested VMs.
We proved the optimality of our algorithm and analyzed its time complexity.
We also proposed an efficient heuristic algorithm, which is several orders faster than the optimal algorithm.
Simulation results show that both proposed algorithms can achieve much higher acceptance ratio compared to the baseline solution in the online scenario, and our heuristic algorithm can achieve similar performance as the optimal with much faster computational speed.
\bibliographystyle{IEEEtran}

\begin{thebibliography}{10}
\providecommand{\url}[1]{#1}
\csname url@samestyle\endcsname
\providecommand{\newblock}{\relax}
\providecommand{\bibinfo}[2]{#2}
\providecommand{\BIBentrySTDinterwordspacing}{\spaceskip=0pt\relax}
\providecommand{\BIBentryALTinterwordstretchfactor}{4}
\providecommand{\BIBentryALTinterwordspacing}{\spaceskip=\fontdimen2\font plus
\BIBentryALTinterwordstretchfactor\fontdimen3\font minus
  \fontdimen4\font\relax}
\providecommand{\BIBforeignlanguage}[2]{{%
\expandafter\ifx\csname l@#1\endcsname\relax
\typeout{** WARNING: IEEEtran.bst: No hyphenation pattern has been}%
\typeout{** loaded for the language `#1'. Using the pattern for}%
\typeout{** the default language instead.}%
\else
\language=\csname l@#1\endcsname
\fi
#2}}
\providecommand{\BIBdecl}{\relax}
\BIBdecl

\bibitem{Alameddine2016}
\BIBentryALTinterwordspacing
H.~A.~Alameddine, S.~Ayoubi, and C.~Assi, ``Protection plan design for cloud tenants with bandwidth guarantees,'' in \emph{IEEE DRCN}, 2016.
\BIBentrySTDinterwordspacing

\bibitem{Alameddine2016b}
\BIBentryALTinterwordspacing
H.~A.~Alameddine, S.~Ayoubi, and C.~Assi, ``Offering Resilient and Bandwidth Guaranteed Services in Multi-Tenant Cloud Networks: Harnessing the Sharing Opportunities,'' in \emph{ITC}, 2016.
\BIBentrySTDinterwordspacing

\bibitem{Al-Fares2008}
\BIBentryALTinterwordspacing
M.~Al-Fares, A.~Loukissas, and A.~Vahdat, ``{A Scalable, Commodity Data Center
  Network Architecture},'' in \emph{ACM SIGCOMM}, 2008.
\BIBentrySTDinterwordspacing

\bibitem{Ballani2011b}
\BIBentryALTinterwordspacing
H.~Ballani, P.~Costa, T.~Karagiannis, and A.~Rowstron, ``{Towards Predictable
  Datacenter Networks},'' in \emph{ACM SIGCOMM}, 2011.
\BIBentrySTDinterwordspacing

\bibitem{Bin2011}
E.~Bin, O.~Biran, O.~Boni, E.~Hadad, E.~K. Kolodner, Y.~Moatti, and D.~H.
  Lorenz, ``{Guaranteeing High Availability Goals for Virtual Machine
  Placement},'' in \emph{IEEE ICDCS}, 2011.

\bibitem{Bodik2012}
\BIBentryALTinterwordspacing
P.~Bod{\'{i}}k, I.~Menache, M.~Chowdhury, P.~Mani, D.~A. Maltz, and I.~Stoica,
  ``{Surviving Failures in Bandwidth-constrained Datacenters},'' in \emph{ACM
  SIGCOMM}, 2012.
\BIBentrySTDinterwordspacing

\bibitem{Breitgand2012}
\BIBentryALTinterwordspacing
D.~Breitgand and A.~Epstein, ``Improving Consolidation of Virtual Machines with Risk-aware Bandwidth Oversubscription in Compute Clouds,'' in \emph{IEEE INFOCOM}, 2012.
\BIBentrySTDinterwordspacing

\bibitem{Duffield1999}
\BIBentryALTinterwordspacing
N.~G.~Duffield, P.~Goyal, A.~Greenberg, P.~Mishra, K.~K.~Ramakrishnan, and J.~E.~van~der~Merive, ``A Flexible Model for Resource Management in Virtual Private Networks,'' in \emph{ACM SIGCOMM}, 1999.
\BIBentrySTDinterwordspacing

\bibitem{Costa2016}
\BIBentryALTinterwordspacing
C.~Fuerst, S.~Schmid, L.~Suresh, and P.~Costa, ``{Online and Elastic
  Resource Reservations for Multi-tenant Datacenters},'' in \emph{IEEE
  INFOCOM}, 2016.
\BIBentrySTDinterwordspacing


\red{
\bibitem{Even2013}
\BIBentryALTinterwordspacing
G.~Even, M.~Medina, G.~Schaffrath, and S.~Schmid, ``Competitive and Deterministic Embeddings of Virtual Networks,'' in \emph{ICDCN}, 2012.
\BIBentrySTDinterwordspacing
}

\bibitem{Greenberg2009}
\BIBentryALTinterwordspacing
A.~Greenberg, J.~R. Hamilton, N.~Jain, S.~Kandula, C.~Kim, P.~Lahiri, D.~A.
  Maltz, P.~Patel, and S.~Sengupta, ``{VL2: A Scalable and Flexible Data Center
  Network},'' in \emph{ACM SIGCOMM}, 2009.
\BIBentrySTDinterwordspacing

\red{
\bibitem{Guo2010}
\BIBentryALTinterwordspacing
C.~Guo, G.~Lu, H.~J.~Wang, S.~Yang, C.~Kong, P.~Sun, W.~Wu, Y.~Zhang, ``SecondNet: A Data Center Network Virtualization Architecture with Bandwidth Guarantees,'' in \emph{ACM CoNEXT}, 2010.
\BIBentrySTDinterwordspacing
}

\bibitem{Lee}
\BIBentryALTinterwordspacing
J.~Lee, Y.~Turner, M.~Lee, L.~Popa, S.~Banerjee, J.-M.~Kang, and P.~Sharma, ``Application-Driven Bandwidth Guarantees in Datacenters,'' in \emph{ACM SIGCOMM}, 2014.
\BIBentrySTDinterwordspacing

\bibitem{Li2015c}
\BIBentryALTinterwordspacing
%
%
%
D.~Li, C.~Chen, J.~Guan, Y.~Zhang, J.~Zhu, and R.~Yu, ``{DCloud: Deadline-aware
  Resource Allocation for Cloud Computing Jobs},'' \emph{IEEE Trans. Parallel
  Distrib. Syst.}, 27(8): 2248--2260, 2015.
\BIBentrySTDinterwordspacing

\bibitem{Machida2010}
\BIBentryALTinterwordspacing
F.~Machida, {Masahiro Kawato}, and Y.~Maeno, ``{Redundant Virtual Machine
  Placement for Fault-tolerant Consolidated Server Clusters},'' in \emph{IEEE
  NOMS}, 2010.
\BIBentrySTDinterwordspacing

\bibitem{Nagarajan2007}
\BIBentryALTinterwordspacing
A.~B.~Nagarajan, F.~Mueller, C.~Engelmann, and S.~L. Scott, ``{Proactive Fault
  Tolerance for HPC with Xen Virtualization},'' in \emph{ACM ICS}, 2007.
\BIBentrySTDinterwordspacing

\red{
\bibitem{Rost2015} 
\BIBentryALTinterwordspacing
M.~Rost, C.~Fuerst, and S.~Schmid, ``Beyond the Stars: Revisiting Virtual Cluster Embeddings,'' \emph{ACM SIGCOMM Comput. Commun. Rev.}, 45(3): 12–-18, 2015.
\BIBentrySTDinterwordspacing
}

\bibitem{Xie2012}
\BIBentryALTinterwordspacing
D.~Xie, N.~Ding, Y.~C. Hu, and R.~Kompella, ``{The Only Constant Is Change:
  Incorporating Time-varying Network Reservations in Data Centers},'' in
  \emph{ACM SIGCOMM}, 2012.
\BIBentrySTDinterwordspacing

\bibitem{Xu2004a}
\BIBentryALTinterwordspacing
{D.~Xu}, {Y.~Xiong}, {C.~Qiao}, and {G.~Li}, ``{Failure
  Protection in Layered Networks with Shared Risk Link Groups},'' \emph{IEEE
  Network}, 18(3): 36--41, 2004.
\BIBentrySTDinterwordspacing

\bibitem{Xu2013}
\BIBentryALTinterwordspacing
J.~Xu, J.~Tang, K.~Kwiat, W.~Zhang, and G.~Xue, ``{Enhancing Survivability in
  Virtualized Data Centers: A Service-Aware Approach},'' \emph{IEEE J. Sel.
  Areas Commun.}, 31(12): 2610--2619, 2013.
\BIBentrySTDinterwordspacing

\bibitem{Xue2003a}
\BIBentryALTinterwordspacing
{G.~Xue}, {L.~Chen}, and K.~Thulasiraman, ``{Quality-of-Service and
  Quality-of-Protection Issues in Preplanned Recovery Schemes Using Redundant
  Trees},'' \emph{IEEE J. Sel. Areas Commun.}, 21(8): 1332--1345,
  2003.
\BIBentrySTDinterwordspacing

\bibitem{Yalagandula2004}
\BIBentryALTinterwordspacing
P.~Yalagandula, S.~Nath, H.~Yu, P.~B. Gibbons, and S.~Seshan, ``{Beyond
  Availability: Towards a Deeper Understanding of Machine Failure
  Characteristics in Large Distributed Systems},'' in \emph{USENIX WORLDS},
  2004.
%

\bibitem{Yeow2010}
\BIBentryALTinterwordspacing
W.-l.~Yeow, C.~Westphal, and U.~C. Kozat, ``{Designing and Embedding Reliable
  Virtual Infrastructures},'' in \emph{ACM VISA}, 2010.
\BIBentrySTDinterwordspacing

\bibitem{Yu2011a}
\BIBentryALTinterwordspacing
H.~Yu, V.~Anand, C.~Qiao, and G.~Sun, ``{Cost Efficient Design of Survivable
  Virtual Infrastructure to Recover from Facility Node Failures},'' in
  \emph{IEEE ICC}, 2011.
\BIBentrySTDinterwordspacing

\bibitem{Yu}
L.~Yu and Z.~Cai, ``{Dynamic Scaling of Virtual Clusters with Bandwidth
  Guarantee in Cloud Datacenters},'' in \emph{IEEE INFOCOM}, 2016.

\bibitem{Yu2014a}
\BIBentryALTinterwordspacing
L.~Yu and H.~Shen, ``{Bandwidth Guarantee under Demand Uncertainty in
  Multi-tenant Clouds},'' in \emph{IEEE ICDCS}, 2014.
\BIBentrySTDinterwordspacing

\red{
\bibitem{infocom17original}
\BIBentryALTinterwordspacing
R.~Yu, G.~Xue, X.~Zhang, D.~Li, ``{Survivable and Bandwidth-Guaranteed Embedding of Virtual Clusters in Cloud Data Centers},'' in \emph{IEEE INFOCOM}, 2017.
\BIBentrySTDinterwordspacing
}

%
%
%
%
%
%
%

\bibitem{Networks2014}
\BIBentryALTinterwordspacing
W.~Zhang, G.~Xue, J.~Tang, and K.~Thulasiraman, ``{Faster Algorithms for Construction of Recovery Trees Enhancing QoP and QoS},'' \emph{IEEE/ACM Trans. Netw.}, 16(3): 642--655, 2008.
%

\bibitem{Zhang2014a}
\BIBentryALTinterwordspacing
Q.~Zhang, M.~F. Zhani, M.~Jabri, and R.~Boutaba, ``{Venice: Reliable Virtual
  Data Center Embedding in Clouds},'' in \emph{IEEE INFOCOM}, 2014.
\BIBentrySTDinterwordspacing

\bibitem{Zhu2012a}
J.~Zhu, D.~Li, J.~Wu, H.~Liu, Y.~Zhang, and J.~Zhang, ``{Towards Bandwidth
  Guarantee in Multi-tenancy Cloud Computing Networks},'' in \emph{IEEE ICNP},
  2012.

\end{thebibliography}

\appendix
\label{sec:apdx}

\begin{proof}[Proof of Lemma~\ref{l:1}]
Preliminarily, according to Eq.~\eqref{eq:b}, a subtree offering at most $N/2$ working VMs can reduce its working VMs without increasing bandwidth on its out-bound link, and since such subtree only has child offering at most $N/2$ working VMs, such reduction does not increase link bandwidth within the subtree as well.
First, as $T_v$ can offer $n_v > N/2$ VMs, its bandwidth allocation satisfies $\mathcal{B}_s(l_v) \ge \min \{ n_v, N-n_v \} \cdot B = (N - n_v) \cdot B$.
Hence offering $n_v^- = N - n_v$ or less working VMs will not increase bandwidth on $T_v$'s out-bound link.
If $v$ is a PM, then we can directly reduce the number of working VMs on $v$ from $n_v$ to $N - n_v$.
Then, assume the lemma holds for all nodes lower than level $l$ in the tree, and let $v$ be a switch on level $l$.
If for any child $u$ of $v$, the number of working VMs that it offers $n_u \le N / 2$, then we can reduce the number of working VMs on any PM in $T_v$ without increasing bandwidth needed on any link. %
Now if some child $u$ of $v$ has $n_u > N/2$ working VMs.
Since $n_v \le N$, there is at most one such child.
First, we reduce working VMs in all other child subtrees to none.
Then, by induction from the $(l-1)$-th level, the number of working VMs on $u$ can be reduced from $n_u$ to $n_u^- = N - n_u$, which is no less than $n_v^-$, without increasing bandwidth on any link.
Now, since $n_v^- \le n_u^- \le N / 2$, we can reduce the number of working VMs  in $T_u$ without increasing bandwidth on any link, which completes the proof.
\end{proof}

\begin{proof}[Proof of Lemma~\ref{l:2}]
To start with, this always holds for any PM node $v \in H$ by simply reducing the number of VMs.
If subtree $T_v$ can offer more than $N$ working VMs, each child subtree $T_u$ of $v$ is one of the three cases: 1) $T_u$ can offer more than $N$ working VMs, 2) $T_u$ can offer working VMs in $(N/2, N]$, or 3) $T_u$ can offer no more than $N/2$ working VMs.
If there is any subtree $T_u$ in case 1, then we let all PMs not in $T_u$ to have $0$ working VMs, and continue to prove the lemma on $T_u$, until we reach any PM node.
Otherwise, if there are at least two subtrees $T_u$ and $T_w$ both in case 2, then we also let all PMs not in these two subtrees to have $0$ working VMs.
Without loss of generality, assume $T_u$ can offer $n_u$ working VMs, $T_w$ can offer $n_w$ working VMs, and $n_u \ge n_w > N/2$.
By Lemma~\ref{l:1}, $T_w$ can also offer $N - n_u \le N - n_w = n_w^-$ working VMs.
Hence we can let $T_w$ offer $N-n_u$ working VMs, and let $T_u$ offer $n_u$, the sum of which is exactly $N$.
Finally, if there is at most one subtree $T_u$ in case 2, we can always reduce the total number of working VMs not hosted in $T_u$ (or in all subtrees if $T_u$ does not exist) to $N - n_u$, without increasing bandwidth needed on any link.
This completes the proof.
\end{proof}

\begin{proof}[Proof of Theorem~\ref{th:1}]
First, we show that any feasible solution to SVCEP is feasible to SVCEP-GP with $v = r$ and $n_0 = n_1 = N$, and vice versa.
The forward direction is true based on their problem definitions.
On the reverse direction, by Lemma~\ref{l:2}, a feasible solution to SVCEP-GP with $v = r$ and $n_0 = n_1 = N$ is always able to provide exactly $N$ working VMs in any scenario.
This proves the equivalence between SVCEP and SVCEP-GP with $v = r$ and $n_0 = n_1 = N$.

We then prove by induction that SVCEP-GP has been solved optimally in Algorithm~\ref{a:1}.
For PMs, each entry is computed directly and is optimal based on definition.
For each switch node $v$, $N_v[n_0, n_1]$ is dependent on the four possible values computed in the $N_v'$ table as in Eq.~\eqref{eq:snv}.
Based on definition, if $N_v' [n_0, n_1, d_v]$ is correctly computed for every $n_0$ and $n_1$, $N_v [n_0, n_1]$ is correct in applying the bandwidth bound of $T_v$ based on Eq.~\eqref{eq:snv}.

As for $N_v' [n_0, n_1, k]$, initially when $k=0$, the values are computed directly based on definition.
Then, assume that $N_v' [n_0', n_1', k-1]$ and $N_{u_k} [n_0'', n_1'']$ are correctly computed for any $n_0', n_1', n_0'', n_1'' \in [0, N]$ respectively, where $u_k$ is the $k$-th child of node $v$.
The corresponding optimal VM allocation to the problem of $N_v' [n_0, n_1, k]$ can be split into two parts: $N^*$ in the first $(k-1)$ subtrees of $v$, and $N^{**}$ in the $k$-th subtree.
Assume that the first part can offer $n_0^*$ and $n_1^*$ VMs in the normal operation and the worst-case failure scenario respectively, and the second can offer $n_0^{**}$ and $n_1^{**}$ respectively.
Based on definition, $N^*$ and $N^{**}$ are feasible solutions to the problems $N_v' [n_0^*, n_1^*, k-1]$ and $N_{u_k} [n_0^{**}, n_1^{**}]$ respectively, and $n_0^* + n_0^{**} \ge n_0$ as well as $\min \{ n_0^* + n_1^{**}, n_0^{**} + n_1^{*} \} \ge n_1$.
\cmts{Now assume $N_v' [n_0, n_1, k]$ is not optimal.
This means that $N^* + N^{**} < N_v' [n_0, n_1, k] \le N_v' [n_0^*, n_1^*, k-1] + N_{u_k} [n_0^{**}, n_1^{**}]$, the righthand inequality due to the algorithm.
Hence we have either $N^* < N_v' [n_0^*, n_1^*, k-1]$ or $N^{**} < N_{u_k} [n_0^{**}, n_1^{**}]$, contradicting the optimality of $N_v' [n_0^*, n_1^*, k-1]$ or $N_{u_k} [n_0^{**}, n_1^{**}]$.}
This completes the proof.
\end{proof}

\begin{proof}[Proof of Theorem~\ref{th:2}]
Each node in the tree is traversed once during the DP process.
At each PM node, each entry indexed by $n_0$ and $n_1$ needs $O(1)$ time due to Eq.~\eqref{eq:hnv}, hence each PM takes $O(N^2)$ times.
For each switch node, its computation iterates over its every child $v$, and each entry indexed by $v$, $n_0$ and $n_1$ needs to iterate over up to $4$ variables, $n_0', n_1', n_0'', n_1''$, hence each switch node $v$ takes $O(d_v \cdot N^6)$ complexity.
Summing them all up, the final complexity of the DP computation process is $O(|V| \cdot N^6)$.
The time for backtracking is bounded by the DP computations.
This completes the proof, which shows that the time complexity of Algorithm~\ref{a:1} is polynomial to the network size $|V|$ and the request size $N$.
\end{proof}

\begin{proof}[Proof of Lemma~\ref{l:heu}]
We need to prove that the VCE of $\mathcal{J}'$ can provide exactly $N$ bandwidth-guaranteed VMs in any failure $F \in H$.
Let $n_v$ be the number of VMs allocated in subtree $T_v$ in the VCE; also let $\mathcal{B}(l_v)$ be the hose model bandwidth of the VCE on link $l_v$ as specified in Eq.~\eqref{eq:b}.
First, since each PM hosts at most $N'$ VMs, at least $N$ VMs remain during the failure.
Now, let $(s_0, s_1, \cdots, s_K)$ be the path from root $r$ to PM $F$, with $s_0 = r$ and $s_K = F$.
By cutting the $K$ links $(s_{k-1}, s_{k})$ for $k \le K$, the topology is partitioned into $K+1$ components, and we denote the component containing node $s_k$ as $C_k$.
$C_K$ contains only the failed PM $F$, hence is discarded.
For any link $l_v$ within each component, its allocated bandwidth satisfies
$\mathcal{B}(l_v) = \min \{ n_v, N+N'-n_v \} \cdot B \ge \min \{ n_v, N-n_v \} \cdot B$,
hence each component can offer all the allocated VMs within it for request $\mathcal{J}$, if not considering bandwidth on the cut links.
If any component $C_k$ (viewed as a subtree rooted at $s_k$) can offer at least $N$ VMs, by Lemma~\ref{l:2} it can offer exactly $N$ VMs in this scenario.
Otherwise, let $S$ be an initially empty set of VMs, we then proceed with each component $C_k$ from $k=0$ to ${K-1}$, each time adding all allocated VMs within $C_k$ to $S$, until \emph{at least} $N$ VMs are added.
Assume we stop at $C_\kappa$, where $\kappa \in [1, K-1]$.
$S$ now contains $[N, 2N-1)$ VMs, as both the first $(\kappa-1)$ components and the $\kappa$-th component contain less than $N$ VMs.
For node $s_k$ where $k \le \kappa$, let $n_{k} = N+N'-n_v$ be the number of VMs added before $C_k$, then its out-bound link $l_{s_k} = (s_{k-1}, s_{k})$ has bandwidth allocation $\mathcal{B}(l_{s_k}) = \min \{ n_k, N+N'-n_k \} \cdot B \ge \min \{ n_k, N-n_k \} \cdot B$.
Hence each link $(s_{k-1}, s_k)$ is sufficient to offer all VMs in the first $(k-1)$ components, for $k \le \kappa$.
Then we form a new tree $T'$ by pruning all components after $C_\kappa$, and assigning $s_\kappa$ as the root of $T'$.
Every link in $T'$ is either within some component $C_k$, or one of the links $(s_{k-1}, s_k)$, where $k \le \kappa$, hence its bandwidth is sufficient as above.
Now, $T'$ can offer $|S| \ge N$ working VMs, hence by Lemma~\ref{l:2} it can offer exactly $N$ working VMs.
Therefore, there is a feasible VCE for $\mathcal{J}$ in the given VCE of $\mathcal{J}'$ in any scenario,
and the lemma follows.
\end{proof}
\end{document}